
\documentclass[10pt,journal,compsoc]{IEEEtran}

\usepackage{bigstrut,array,multirow,tabularx}
\usepackage{diagbox}
\usepackage{slashbox}
\usepackage{makecell}
\usepackage{amsthm}
\usepackage{tikz}
\usepackage[super]{nth}
\usepackage{algorithm}
\usepackage{algorithmic}
\usepackage{graphicx}
\usepackage{verbatim}
\usetikzlibrary{patterns}

\usepackage{amsmath}
 
\interdisplaylinepenalty=2500
\usepackage{amssymb}
\usepackage{amsthm}
\usepackage[normalem]{ulem}
\usepackage[hidelinks]{hyperref}
\usepackage[nameinlink, capitalise, noabbrev]{cleveref}
\usepackage{mathtools}
\usepackage{dsfont}

\DeclarePairedDelimiter\ceil{\lceil}{\rceil}


\usepackage{tikz}
\usepackage{pgfplots} 
\pgfplotsset{compat=1.16} 
\usepackage{bigstrut,array,multirow,tabularx}
\usepackage{caption}
\usepackage{slashbox}

 
\newtheorem{definition}{Definition}[section]
\newtheorem{theorem}{Theorem}[section]
\newtheorem{remark}{Remark}

\newtheorem{proposition}[theorem]{Proposition}
\newcounter{example}[section]
\newenvironment{example}[1][]{\refstepcounter{example}\par\medskip
   \noindent \textbf{Example~\theexample. #1} \rmfamily}{\medskip}

\usepackage{amsmath}

\usepackage{amsfonts}

\usepackage{multirow}
\usepackage{multicol}
\usepackage{caption}
\usepackage{array}
\usepackage{booktabs}
\usepackage{graphicx}

\theoremstyle{theorem}

\definecolor{bblue}{HTML}{FFE333}
\definecolor{rred}{HTML}{7FFF00}
\definecolor{ggreen}{HTML}{87CEEB}


%
\ifCLASSOPTIONcompsoc
    \usepackage[caption=false, font=normalsize, labelfont=sf, textfont=sf]{subfig}
\else
\usepackage[caption=false, font=footnotesize]{subfig}
\fi
\usepackage{mwe}

%
\ifCLASSOPTIONcompsoc
  \usepackage[nocompress]{cite}
\else
  \usepackage{cite}
\fi
%

%
\ifCLASSINFOpdf
\else
\fi
\hyphenation{op-tical net-works semi-conduc-tor}

\begin{document}
%
\title{On the Power of Gradual Network Alignment Using Dual-Perception Similarities}



\author{Jin-Duk Park, Cong Tran, Won-Yong Shin, {\em Senior Member}, {\em IEEE}, and Xin Cao, {\em Member}, {\em IEEE}
\IEEEcompsocitemizethanks{
\IEEEcompsocthanksitem Jin-Duk Park is with the School of Mathematics and Computing (Computational Science and Engineering), Yonsei University, Seoul 03722, South Korea.
 \protect\\
E-mail: jindeok6@yonsei.ac.kr
\IEEEcompsocthanksitem Cong Tran is with the Faculty of Information Technology, Posts and Telecommunications Institute of Technology, Hanoi 100000, Vietnam. \protect\\
E-mail: congtt@ptit.edu.vn
\IEEEcompsocthanksitem Won-Yong Shin is with the School of Mathematics and Computing (Computational Science and Engineering), Yonsei University, Seoul 03722, South Korea, and the Graduate School of Artificial Intelligence, Pohang University of Science and Technology (POSTECH), Pohang 37673, South Korea.
 \protect\\
 E-mail: wy.shin@yonsei.ac.kr
\IEEEcompsocthanksitem Xin Cao is with the School of Computer Science and Engineering, The University of New South Wales, Sydney 2052, Australia. \protect\\
E-mail: xin.cao@unsw.edu.au

(Corresponding author: Won-Yong Shin.)}}


%
%

\markboth{}%
{Shell \MakeLowercase{\textit{et al.}}: Bare Demo of IEEEtran.cls for Computer Society Journals}
%



%

\IEEEtitleabstractindextext{%
\begin{abstract}

Network alignment (NA) is the task of finding the correspondence of nodes between two networks based on the network structure and node attributes. Our study is motivated by the fact that, since most of existing NA methods have attempted to discover all node pairs {\em at once}, they do not harness information enriched through {\em interim} discovery of node correspondences to more accurately find the next correspondences during the node matching. To tackle this challenge, we propose \textsf{Grad-Align}, a new NA method that {\em gradually} discovers node pairs by making full use of node pairs exhibiting strong consistency, which are easy to be discovered in the early stage of gradual matching. Specifically, \textsf{Grad-Align} first generates node embeddings of the two networks based on graph neural networks along with our {\em layer-wise reconstruction loss}, a loss built upon capturing the first-order and higher-order neighborhood structures. Then, nodes are gradually aligned by computing {\em dual-perception similarity} measures including the {\em multi-layer embedding similarity} as well as the {\em Tversky similarity}, an asymmetric set similarity using the Tversky index applicable to networks with different scales. Additionally, we incorporate an edge augmentation module into \textsf{Grad-Align} to reinforce the structural consistency. Through comprehensive experiments using real-world and synthetic datasets, we empirically demonstrate that \textsf{Grad-Align} consistently outperforms state-of-the-art NA methods.
\end{abstract}

\begin{IEEEkeywords}
Consistency; dual-perception similarity; gradual network alignment; graph neural network; Tversky index
\end{IEEEkeywords}}

\maketitle
%
\IEEEpeerreviewmaketitle

\IEEEraisesectionheading{\section{Introduction}\label{sec:introduction}}

\subsection{Background and Motivation}

\IEEEPARstart{M}{ultiple} networks are ubiquitous in various real-world application domains, ranging from computer vision, bioinformatics, web mining, chemistry to social network analyses \cite{zhang2016final, trung2020adaptive}. Considerable attention has been paid to conducting network alignment (NA) (also known as graph matching), which is the task of finding the node correspondence across different networks and is often the very first step to perform downstream machine learning (ML) tasks on multiple networks in such applications, thus leading to more precise analyses. In social networks, the identification of different accounts (e.g., Facebook, Twitter, and Foursquare) of the same user facilitates friend recommendation, user behavior prediction, and personalized advertisement \cite{zhang2016final, trung2020adaptive,zhou2018deeplink}. For example, by discovering the correspondence between Twitter and Foursquare networks of the same user, we can improve the performance of friend/location recommendations for Foursquare users whose social connections and activities can be very sparse \cite{kong2013inferring}. As another example, in bioinformatics, aligning tissue-specific protein-protein interaction (PPI) networks can be effective in solving the problem of candidate gene prioritization \cite{ni2014inside}.

Despite the effectiveness and utilities of NA, performing the NA task poses several practical challenges. First, a fundamental assumption behind existing NA methods is the structural and/or attribute {\em consistencies}. That is, the same node is assumed to have a consistency over its connectivity structure and/or its metadata (i.e., attributes) across different networks \cite{zhang2016final}. However, such consistency constraints are not often satisfied in real-world applications. Examples of structural consistency violations include but are not limited to the following cases: 1) a user might have more connections in one social network site (e.g., Facebook) than those in another site (e.g., LinkedIn) and 2) the same gene might exhibit considerably different interaction patterns across different tissue-specific PPI networks \cite{zhang2016final, ni2014inside}. Moreover, there exist users who deliberately use different usernames across multiple social networks \cite{liu2016aligning}, which violates the assumption of attribute consistency. Thus, not all ground truth cross-network node pairs are always {\em strongly consistent}. Our study is basically initiated by the fact that most of existing NA methods (e.g., \cite{zhang2016final,zhou2018deeplink,man2016predict,heimann2018regal,singh2008global,bayati2009algorithms}) have attempted to discover all node pairs {\em at once} based on modeling their own similarity between cross-network node pairs (refer to Fig. \ref{fig1}), which thereby may not take advantage of the information enriched through {\em interim} discovery of node pairs in order to more accurately find the next node pairs during the node matching. The motivation of our study is that strongly consistent node pairs, which are easy to be found, can be very informative when discovering node pairs having weak consistency; how to exploit the information of the strongly consistent node pairs remains a technical challenge in the NA task.

Second, more importantly, source and target networks often manifest {\em different scales} in terms of the number of nodes. The NA task has been carried out using  benchmark datasets consisting of two imbalanced networks (see \cite{zhang2016final,trung2020adaptive,liu2016aligning,heimann2018regal,zhou2018deeplink}, and references therein)---for example, the Douban Online and Douban Offline networks have 3,906 and 1,118 users, respectively \cite{zhong2012comsoc}. Unfortunately, the problem raised by such networks with different scales exacerbates the inconsistency of cross-network node pairs, which thus results in a low alignment accuracy. It is another open challenge how to overcome this network imbalance problem in conducting NA.

Motivated by the above-mentioned open challenges, we push forward the state-of-the-art performance by designing a new NA method that universally shows the superior performance even including the case of networks with different scales.

\subsection{Main Contributions}
 \begin{figure}
    \centering
    \includegraphics[scale=0.14]{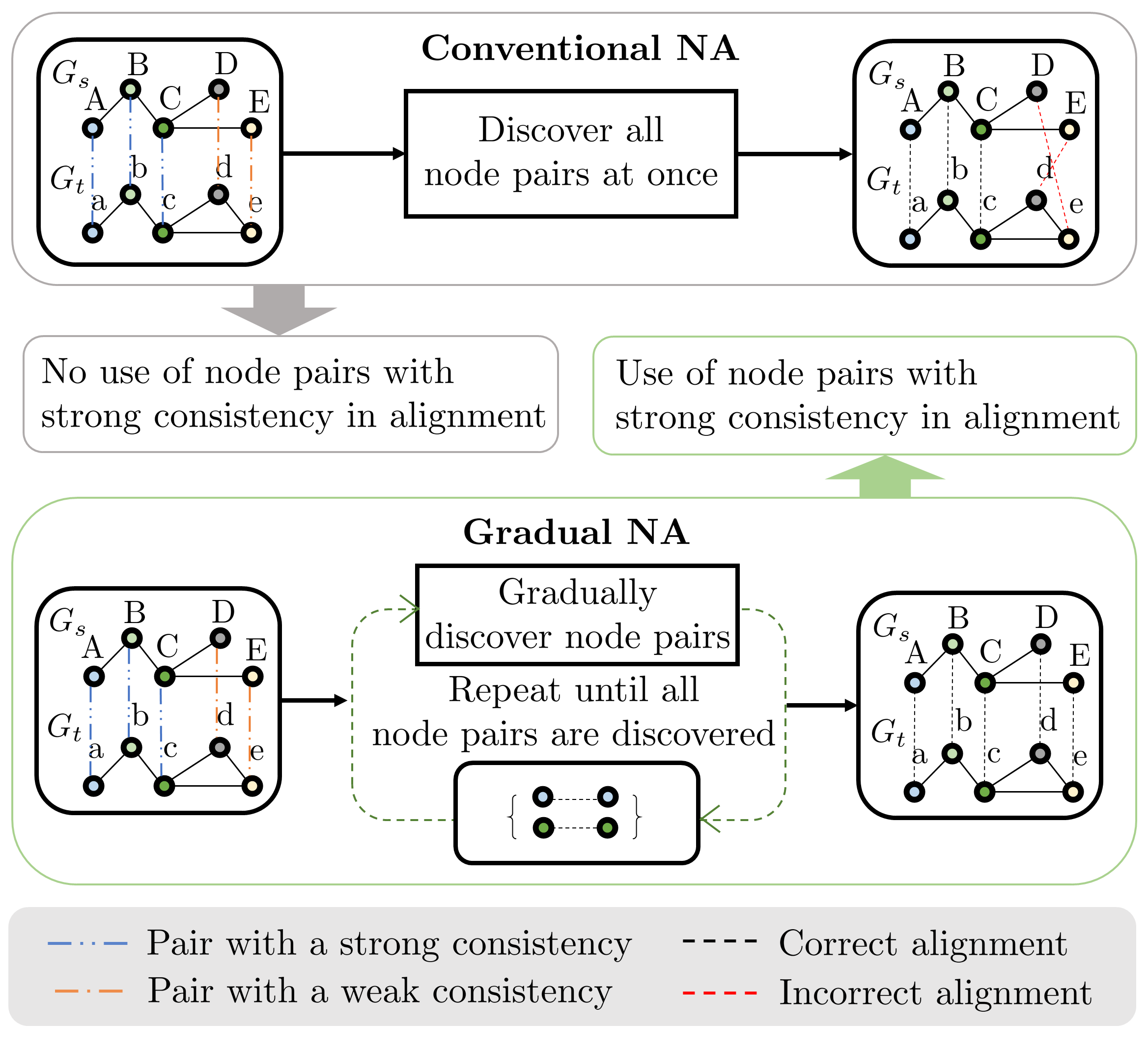}
    \caption{Comparison of conventional NA methods and our \textsf{Grad-Align} method.}
    \label{fig1}
\end{figure} 
In this paper, we introduce \textsf{Grad-Align}, a novel NA method that {\em gradually} finds the node correspondence.\footnote{The source code used in this paper is available online
(https://github.com/jindeok/Grad-Align-full).} \textsf{Grad-Align} discovers only a part of node pairs iteratively until all node pairs are found in order to combat the node inconsistency across two different networks, while fully exploiting the information of already aligned node pairs having strong consistency and/or the prior matching information in discovering weakly consistent node pairs. 

To this end, we characterize and compute our own similarity measure between nodes across two networks, termed the {\em dual-perception similarity}, which is composed of the {\em multi-layer embedding similarity} and the {\em Tversky similarity}. We first calculate the similarity of multi-layer embeddings using graph neural networks (GNNs), where the weight-sharing technique is used to consistently generate hidden representations for each network and a newly designed layer-wise reconstruction loss is used to precisely capture multi-hop neighborhood structures during training. Then, we calculate the Tversky similarity, which is a newly devised asymmetric set similarity measure using the Tversky index \cite{tversky1977features} to alleviate the problem of network scale imbalance. By iteratively updating our dual-perception similarity, we gradually match a part of node pairs until all node correspondences are found.

\begin{figure} 
\begin{minipage}{.5\linewidth}
\centering
\subfloat[\label{1a}]{\label{main:a}\includegraphics[scale=.15]{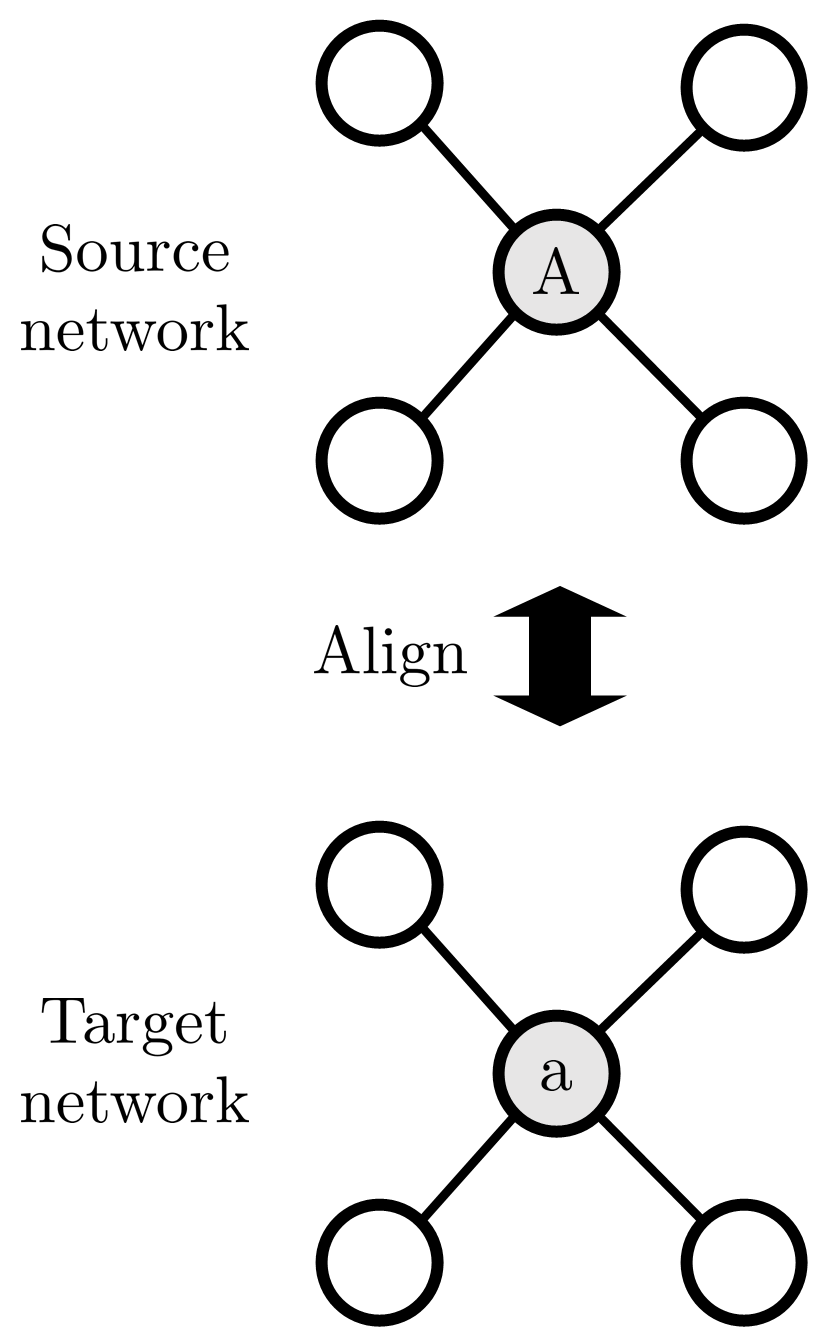}}
\end{minipage}%
\begin{minipage}{.5\linewidth}
\centering
\subfloat[\label{1b}]{\label{main:b}\includegraphics[scale=.15]{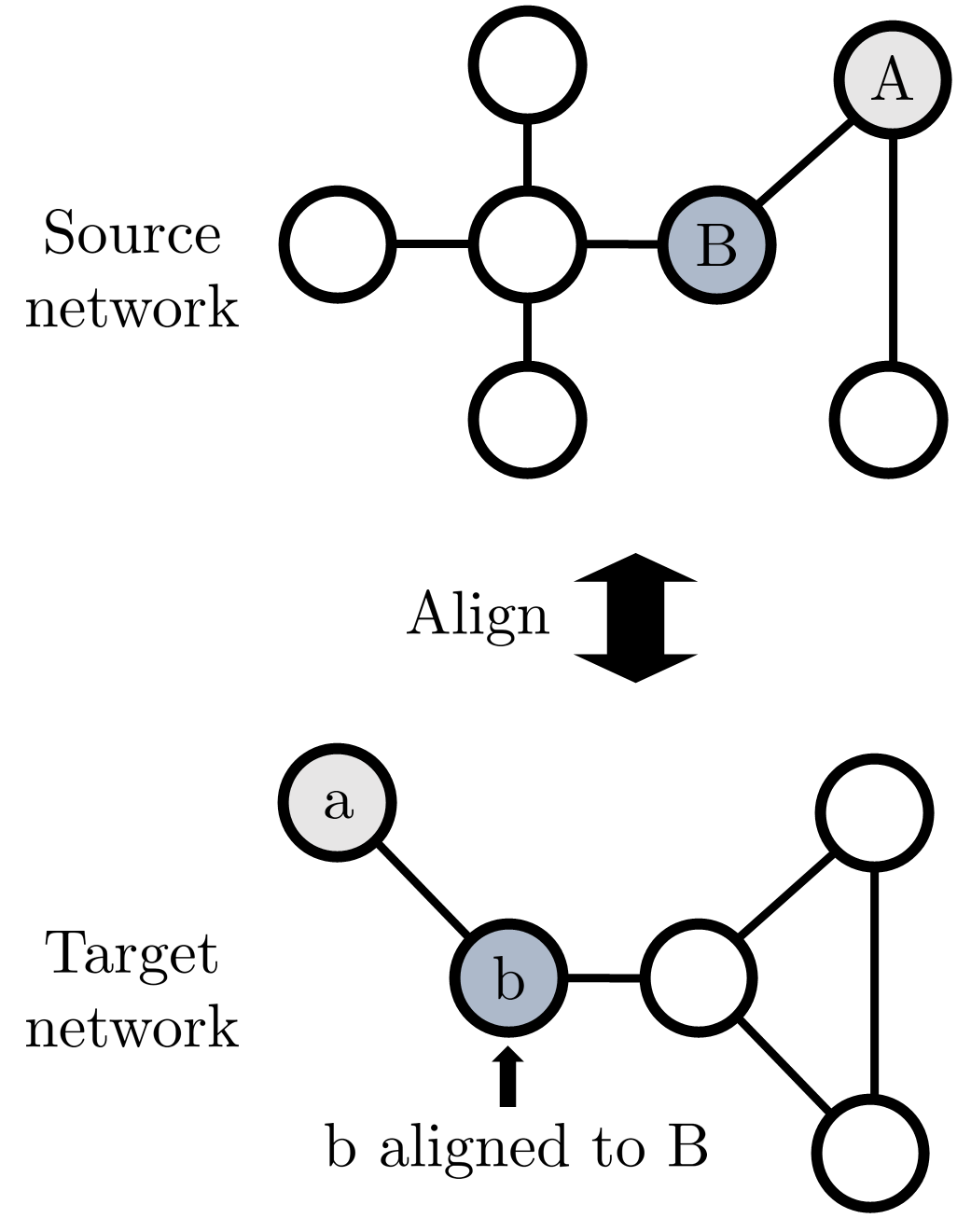}}
\end{minipage}\par\medskip
\caption{An example illustrating a node pair (A,a) that reveals (a) strong structural consistency and (b) weak structural consistency.}
\label{philosophy_fig_sub} 
\end{figure}

For a better understanding, we give an instance of two different networks including a node pair with either strong or weak structural consistency in Fig. \ref{philosophy_fig_sub}, which intuitively explains why gradually discovering node pairs is beneficial. Suppose that nodes A and B in a source network correspond to nodes a and b in a target network, respectively, as the ground truth mapping. Since a cross-network node pair (A, a) in Fig. \ref{1a} has strong structural consistency, it can be easily discovered. On the contrary, it is difficult to discover a node pair (A, a) in Fig. \ref{1b} due to its weak structural consistency. In this case, exploiting the information of already aligned node pairs that potentially exhibit strong consistency can be quite useful in discovering weakly consistent node pairs. For example, when we are aware of the node correspondence (B, b) beforehand through gradual matching, it would be much easier to discover the node pair (A, a) due to the fact that nodes A and a are direct neighbors of nodes B and b, respectively. Thus, we are capable of more effectively and precisely finding the node correspondence across different networks via such gradual matching.

The proposed design methodology is built upon rigorous theoretical frameworks by proving that 1) the weight-sharing technique in GNNs guarantees the consistency of cross-network nodes in the embedding space and 2) the impact and benefits of the Tversky similarity are higher than those of the well-known Jaccard index in terms of the growth rate of each similarity measure. Additionally, we analyze the computational complexity of our \textsf{Grad-Align} method. To further improve the performance of \textsf{Grad-Align} by strengthening the structural consistency, we also show its reinforced version that incorporates edge augmentation into the \textsf{Grad-Align} method.

To validate the superiority of our \textsf{Grad-Align} method, we comprehensively perform empirical evaluations using various real-world and synthetic datasets. Experimental results show that our method consistently outperforms state-of-the-art NA methods regardless of the datasets while showing substantial gains up to 75.54\% compared with the second-best performer. We also scrutinize the impact of each module in \textsf{Grad-Align} via ablation studies. Moreover, our experimental results demonstrate the robustness of our \textsf{Grad-Align} method to both structural and attribute noises owing to our dual-perception similarity.

The main technical contributions of this paper are five-fold and summarized as follows:
\begin{itemize}
    \item We introduce \textsf{Grad-Align}, a novel NA method that gradually finds the node correspondence;
    \item We characterize the dual-perception similarity to better capture the consistency of nodes across networks;
    \item We formulate the Tversky similarity applicable to networks with different scales;
    \item We validate the performance of \textsf{Grad-Align} through extensive experiments using real-world datasets as well as synthetic datasets;
    \item We further introduce \textsf{Grad-Align-EA}, a variant of the original \textsf{Grad-Align} method to reinforce the  structural consistency across networks.
\end{itemize}

\subsection{Organization and Notations}
The remainder of this paper is organized as follows. In Section \ref{section 2}, we present prior studies that are related to NA. In Section \ref{section 3}, we explain the methodology of our study, including the problem definition and an overview of our \textsf{Grad-Align} method. Section \ref{section 4} describes technical details of the proposed method. Comprehensive experimental results are shown in Section \ref{section 5}. Finally, we provide a summary and concluding remarks in Section \ref{section 6}.

Table \ref{NotationTable} summarizes the notation that is used in this paper. This notation will be formally defined in the following sections when we introduce our methodology and the technical details.

\begin{table}[t]
    \centering
    \begin{tabular*}{0.99\columnwidth}{ll}
        \toprule
        \textbf{Notation} & \textbf{Description} \\
        \midrule
        \rule{0pt}{7pt}$G_s$ & Source network\\
        \rule{0pt}{7pt}$\mathcal{V}_s$ & Set of nodes in $G_s$\\
        \rule{0pt}{7pt}$\mathcal{E}_s$ & Set of edges in $G_s$\\
        \rule{0pt}{7pt}$\mathcal{X}_s$ &Set of attributes of nodes in $\mathcal{V}_s$\\
        \rule{0pt}{7pt}$G_t$ & Target network\\
        \rule{0pt}{7pt}$\mathcal{V}_t$ & Set of nodes in $G_t$\\
        \rule{0pt}{7pt}$\mathcal{E}_t$ & Set of edges in $G_t$\\
        \rule{0pt}{7pt}$\mathcal{X}_t$ & Set of attributes of nodes in $\mathcal{V}_t$\\
        \rule{0pt}{7pt}$n_s$ & Number of nodes in $G_s$\\
        \rule{0pt}{7pt}$n_t$ & Number of nodes in $G_t$\\
        \rule{0pt}{7pt}$\pi^{(i)}$ & One-to-one node mapping at the $i$-th iteration\\
        \rule{0pt}{7pt}$M$ & Total number of ground truth node pairs\\
        \rule{0pt}{7pt}$\tilde{\mathcal{V}}_s^{(i)}$ & Set of seed nodes in $G_s$ up to the $i$-th iteration\\
        \rule{0pt}{7pt}$\tilde{\mathcal{V}}_t^{(i)}$ & Set of seed nodes in $G_t$ up to the $i$-th iteration\\
        \rule{0pt}{7pt}$\hat{\mathcal{V}}_s^{(i)}$ & Set of newly aligned nodes in $G_s$ at the $i$-th iteration\\
        \rule{0pt}{7pt}$\hat{\mathcal{V}}_t^{(i)}$ & Set of newly aligned nodes in $G_t$ at the $i$-th iteration\\
        \rule{0pt}{7pt}$\tilde{\mathcal{V}}_s^{(0)}$ & Set of prior seed nodes in $G_s$\\
        \rule{0pt}{7pt}$\tilde{\mathcal{V}}_t^{(0)}$ & Set of prior seed nodes in $G_t$\\
        \rule{0pt}{7pt}$\mathbf{H}_s^{(l)}$ & Hidden representation in $G_s$ at the $l$-th GNN layer\\
        \rule{0pt}{7pt}$\mathbf{H}_t^{(l)}$ & Hidden representation in $G_t$ at the $l$-th GNN layer\\
        \rule{0pt}{7pt}$\mathbf{S}_{emb}$ & Multi-layer embedding similarity matrix\\
        \rule{0pt}{7pt}$\mathbf{S}_{Tve}^{(i)}$ & Tversky similarity matrix at the $i$-th iteration\\
        \rule{0pt}{7pt}$\mathbf{S}^{(i)}$ & Dual-perception similarity matrix at the $i$-th iteration\\ 
        \bottomrule
    \end{tabular*}
    \caption{Summary of notations.}
    \label{NotationTable}
\end{table}

\section{Related Work}
\label{section 2}
The method that we propose in this paper is related to three broader fields of research, namely NA only with topological information, NA with attribute information, and network embedding-aided NA.

{\bf NA only with topological information.}
 The topological similarity of networks is a basic and core feature to identify the node correspondence between two given networks. Specifically, given the topological structure of two networks, IsoRank \cite{singh2008global} was designed by propagating the pairwise node similarity and structural consistency to discover the node correspondence. NetAlign \cite{bayati2009algorithms} formulated the NA problem as an integer quadratic program. 
 CLF \cite{zhang2015integrated} presented collective link fusion across partially aligned probabilistic networks. Moreover, BIG-ALIGN \cite{koutra2013big} was developed by using the alternating projected gradient descent approach that aims at solving a constrained optimization problem while finding permutation matrices in NA.

{\bf NA with attribute information.}
 In addition to the structural information, node attributes are another distinguishable feature that helps us find the node correspondence. For example, the profile attributes of each user such as the name, affiliation, and description can be valuable for aligning the same user across social networks \cite{zhang2018mego2vec}. REGAL \cite{heimann2018regal} was presented by performing a low-rank implicit approximation of a similarity matrix that incorporates the structural similarity and attribute agreement between nodes in two disjoint graphs. ULink \cite{mu2016user} was proposed by exploring the concept of latent userspace to more naturally model the relationship between the underlying real users. FINAL \cite{zhang2016final} was presented by leveraging not only the node attribute information but also the edge attribute information in the topology-based NA process.
 
{\bf Network embedding-aided NA.}
With the increasing attention to network embedding (also known as network representation learning) and its diverse applications in solving downstream ML problems, network embedding-aided NA has recently become in the spotlight. Network embedding learns a mapping from each node in a graph to a low-dimensional vector in an embedding space while preserving intrinsic network properties (e.g., the neighborhood structure and high-order proximities), thus resulting in an efficient and scalable representation of the underlying graph \cite{cui2018survey}. PALE \cite{man2016predict} was presented by exploiting the first-order and second-order proximities of node pairs in the embedding space and further adopting a multi-layer perceptron (MLP) architecture to capture the nonlinear relationship between the resulting network embeddings from two different networks. DeepLink \cite{zhou2018deeplink} was presented by employing an unbiased random walk to generate node embeddings of two given networks based on the Skip-gram model \cite{mikolov2013distributed} and using an autoencoder as a mapping function to discover the relation between two embeddings. IONE \cite{liu2016aligning} was designed for solving the NA problem by learning an aligned network embedding in multiple directed and weighted networks. CENALP \cite{du2019joint} was presented by jointly performing NA and link prediction tasks to enhance the alignment accuracy, where a cross-graph embedding method based on random walks was devised. In \cite{du2019joint}, the structural and node attribute similarities were taken into account to predict cross-network links. 
Moreover, since GNNs \cite{DBLP:conf/iclr/KipfW17, DBLP:conf/nips/HamiltonYL17,xu2018powerful,velivckovic2017graph} have  emerged as a powerful network feature extractor in attributed networks, GAlign \cite{trung2020adaptive} was developed by making use of the multi-order (multi-layer) nature of graph convolutional network (GCN) \cite{DBLP:conf/iclr/KipfW17} for NA. 

{\bf Discussion.} Although the aforementioned NA approaches achieve convincing alignment performance under their own network settings, they pose several practical challenges. 
Precisely, the methods were inherently designed in such a way that their performance depend highly on either the topological information \cite{koutra2013big, singh2008global,zhou2018deeplink,zhang2015integrated} or the attribute information \cite{trung2020adaptive, zhang2016final,heimann2018regal}; such a high dependency makes the designed model vulnerable to topological or attribute inconsistency across networks. Moreover, most of the conventional methods such as \cite{koutra2013big,singh2008global,zhou2018deeplink,heimann2018regal,tan2014mapping,zhang2015integrated} find all node pairs {\em at once} without leveraging already discovered node pairs during the node matching, which often fails to correctly find the correspondence of nodes exhibiting weak consistency. Although there was an attempt to find node pairs iteratively (see \cite{du2019joint}), the method focuses on performing NA along with link prediction to enrich the structure information by newly added links. Thus, the impact and benefits of gradual alignment in improving the alignment accuracy were underexplored yet. 


\begin{figure*}
    \centering
    \includegraphics[width=1.0\textwidth]{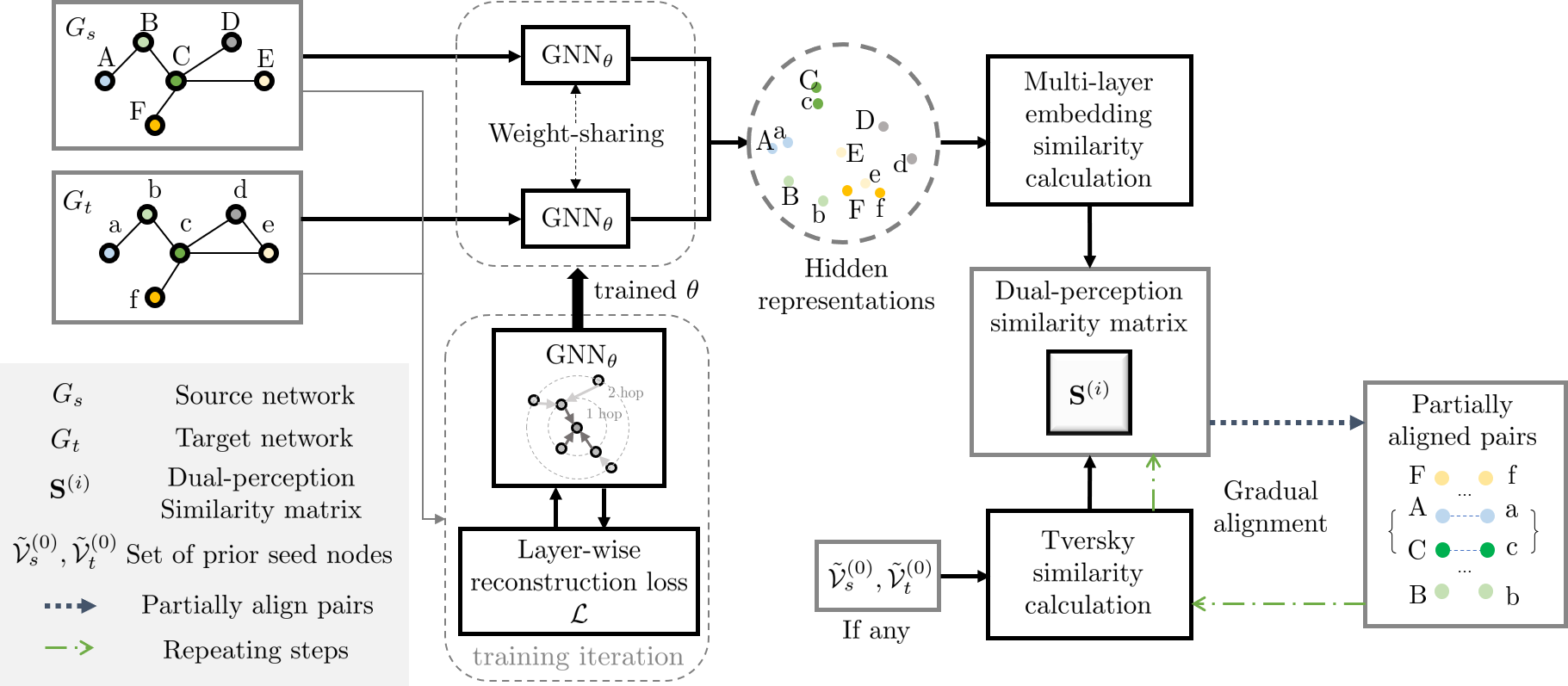}
    \caption{The schematic overview of our \textsf{Grad-Align} method.}
    \label{fig:overview}
\end{figure*}
\section{Methodology}
\label{section 3}
In this section, as a basis for the proposed \textsf{Grad-Align} method in Section \ref{section 4}, we first describe our network model with basic assumptions and formulate the NA problem. Then, we explain an overview of our \textsf{Grad-Align} method using dual-perception similarity measures as a solution to the problem.

\subsection{Network Model and Basic Assumptions}
\label{sub3.1}
We assume source and target networks to be aligned, denoted as $G_s$ and $G_t$, respectively. For simplicity, we assume both $G_s$ and $G_t$ to be undirected and unweighted networks without self-loop or repeated edges. For the source network $G_s = (\mathcal{V}_s,\mathcal{E}_s,\mathcal{X}_s)$, $\mathcal{V}_s$ is the set of nodes (or equivalently, vertices) in $G_s$ whose size is $n_s$ (i.e., $|\mathcal{V}_s|=n_s$); $\mathcal{E}_s$ is the set of edges between pairs of nodes in $\mathcal{V}_s$; and $\mathcal{X}_s \in \mathbb{R}^{n_s \times d}$ is the set of node attribute (feature) vectors, where $d$ is the number of attributes per node. For the target network $G_t = (\mathcal{V}_t,\mathcal{E}_t,\mathcal{X}_t)$, notations $\mathcal{V}_t$, $\mathcal{E}_t$, and $\mathcal{X}_t$ similarly follow those in $G_s$ with $|\mathcal{V}_t|=n_t$ and $\mathcal{X}_t \in \mathbb{R}^{n_t \times d}$.\footnote{Here, the number of node attributes for both networks is assumed to be the same as in other NA methods (see \cite{trung2020adaptive,zhang2016final,du2019joint}).} In our study, we assume that there are $M$ ground truth cross-network node pairs to be discovered.

\subsection{Problem Definition}

    In the following, we formally define the NA problem for given two networks $G_s$ and $G_t$ as follows.

    \begin{definition}[NA] Given two networks
    $G_s=(\mathcal{V}_s,\mathcal{E}_s,\mathcal{X}_s)$ and $G_t=(\mathcal{V}_t,\mathcal{E}_t,\mathcal{X}_t)$, NA aims to find one-to-one mapping $\pi:\mathcal{V}_s \rightarrow \mathcal{V}_t$, where $\pi(u) = v$ and $\pi^{-1}(v) = u$ for $u \in \mathcal{V}_s$ and $v \in \mathcal{V}_t$.
    \end{definition}



\subsection{Overview of \textsf{Grad-Align}}

In this subsection, we describe the overview of our \textsf{Grad-Align} method using the {\em dual-perception} similarity. We start by stating that the standard protocol of NA involves the computation of the similarity between cross-network node pairs, followed by the utilization of assignment algorithms to discover cross-network node pairs from the similarity matrix \cite{zhang2016final, heimann2018regal,trung2020adaptive,kollias2011network}. Our method basically follows the standard protocol of NA but attempts to gradually discover cross-network node pairs based on two different types of similarity measures, including 1) the similarity of {\it multi-layer} embeddings and 2) the {\it Tversky similarity} as a newly characterized asymmetric set similarity using the Tversky index \cite{tversky1977features}. First, to generate multi-layer embeddings, GNNs are adopted for effectiveness in representing the structural and attribute semantic relations between nodes. Second, the Tversky similarity enables us to overcome the problem raised by networks with different scales.
As illustrated in Fig. \ref{fig:overview}, our \textsf{Grad-Align} is basically composed of three main phases: 1) the multi-layer embedding similarity calculation, 2) the Tversky similarity calculation, and 3) the gradual node matching.

(Phase 1) We first focus on calculating the {\em multi-layer} embedding similarity matrix via $L$-layer GNN. Let $\mathbf{H}_s^{(l)}\in \mathbb{R}^{n_s \times h}$ and $\mathbf{H}_t^{(l)}\in \mathbb{R}^{n_t \times h}$ denote the hidden representation in $G_s$ and $G_t$, respectively, at the $l$-th GNN layer where $l\in\{1,\cdots,L\}$ and $h$ is the dimension of each representation vector. As illustrated in Fig. \ref{fig:overview}, in the feed-forward process of GNN, we use the weight-sharing technique in order to consistently generate $\mathbf{H}_s^{(l)}$ and $\mathbf{H}_t^{(l)}$ at each layer. We then train the generated GNN model parameters via a {\em layer-wise reconstruction loss} to make each node representation more distinguishable by exploiting the nodes' proximities up to the $l$-th order (which will be specified in Section \ref{sec 4.1}). Using the hidden representations $\mathbf{H}_s^{(l)}$ and $\mathbf{H}_t^{(l)}$ at each layer through training iterations, we are capable of computing the multi-layer embedding similarity matrix ${\bf S}_{emb}\in \mathbb{R}^{n_s\times n_t}$, which is expressed as
    \begin{equation}
    \label{emb_sim}
    \mathbf{S}_{emb} = \sum_l\mathbf{H}_s^{(l)}{\mathbf{H}_t^{(l)}}^{\top},
    \end{equation}
where the superscript $\top$ denotes the transpose of a matrix and each entry in ${\bf S}_{emb}$ represents the similarity of pairwise node embedding vectors in the two networks $G_s$ and $G_t$.

(Phase 2) We turn to calculating the Tversky similarity matrix. We start by denoting the set of one-hop neighbors of node $u$ in $G_s$ and the set of one-hop neighbors of node $v$ in $G_t$ as $\mathcal{N}_{G_s, u}$ and $\mathcal{N}_{G_t, v}$, respectively. Since the Tversky similarity calculation runs iteratively, we define $\pi^{(i)}$ as the mapping function at the $i$-th iteration. To characterize our new similarity, we also define the ``aligned cross-network neighbor-pair (ACN)" as follows.
\begin{definition}
(ACN). Given a node pair $(u,v)$, if $x\in\mathcal{N}_{G_s,u}$, $y\in\mathcal{N}_{G_t,v}$, and $\pi^{(i)}(x)=y$ (i.e., $(x,y)$ is the already aligned node pair), then the node pair $(x,y)$ belongs an ACN of $(u,v)$ across two networks.
\end{definition}
Then, for each iteration, we are interested in counting the number of ACNs of node pair $(u,v)$, i.e., the intersection of the two sets $\mathcal{N}_{G_s, u}$ and $\mathcal{N}_{G_t, v}$, indicating one-hop neighbors of nodes $u\in\mathcal{V}_s$ and $v\in\mathcal{V}_t$, respectively, upon the one-to-one node mapping $\pi^{(i)}$. However, there is a practical challenge on measuring ACNs. In particular, since two networks are often severely imbalanced in terms of the number of nodes, the most well-known set similarity measure, referred to as the Jaccard index, fails to precisely capture what portion of ACNs are shared. To remedy this problem, we present the Tversky similarity as an asymmetric set similarity measure using the Tversky index. We compute the Tversky similarity matrix ${\bf S}_{Tve}^{(i)}\in \mathbb{R}^{n_s\times n_t}$ iteratively by balancing between the set differences of two sets. The detailed description of ${\bf S}_{Tve}^{(i)}$ will be shown in Section \ref{section 4}. Finally, we calculate our proposed similarity matrix ${\bf S}^{(i)}\in \mathbb{R}^{n_s\times n_t}$, namely the dual-perception similarity matrix, at each iteration using the two similarity matrices as follows:
    \begin{equation}
    \label{final sim}
    \mathbf{S}^{(i)}=\mathbf{S}_{emb}\odot \mathbf{S}^{(i)}_{Tve},
    \end{equation}
where $\odot$ indicates the element-wise matrix multiplication operator.

(Phase 3) We explain how to gradually match node pairs using our dual-perception similarity ${\bf S}^{(i)}$ in (\ref{final sim}) for each iteration. Since ${\bf S}_{emb}$ does not change over iterations, we focus only on how to update the Tversky similarity ${\bf S}_{Tve}^{(i)}$. From the fact that the NA problem is often solved under the supervision of some observed anchor nodes \cite{zhang2016final,trung2020adaptive,man2016predict}, we suppose that there are two sets of {\em prior seed nodes} (also known as anchor nodes) in two different networks, denoted by $\tilde{\mathcal{V}}_s^{(0)}$ and $\tilde{\mathcal{V}}_t^{(0)}$ in source and target networks, respectively, whose links connecting them correspond to the ground truth prior anchor information for NA. These two sets are utilized to \textit{initially} calculate ${\bf S}_{Tve}^{(1)}$ along with the node mapping function $\pi^{(0)}$.\footnote{If $\tilde{\mathcal{V}}_s^{(0)}$ and $\tilde{\mathcal{V}}_t^{(0)}$ are $\emptyset$, which means that no prior information is given, then we only use $\mathbf{S}_{emb}$ at the first matching step.} By iteratively updating ${\bf S}_{Tve}^{(i)}$ based on the information of newly aligned node pairs, \textsf{Grad-Align} makes full use of node pairs exhibiting strong consistency as well as prior seed node pairs (i.e., pairs that connect $\tilde{\mathcal{V}}_s^{(0)}$ and $\tilde{\mathcal{V}}_t^{(0)}$). Next, we explain our gradual alignment process. Let $\tilde{\mathcal{V}}_s^{(i)}$ and $\tilde{\mathcal{V}}_t^{(i)}$ denote the set of aligned nodes in $G_s$ and $G_t$, respectively, up to the $i$-th iteration. Then, it follows that $\tilde{\mathcal{V}}_*^{(i)}=\tilde{\mathcal{V}}_*^{(i-1)} \cup \hat{\mathcal{V}}_*^{(i)}$, where the subscript * represents $s$ and $t$ for source and target networks, respectively, and $\hat{\mathcal{V}}_*^{(i)}$ is the set of {\em newly} aligned nodes in each network at the $i$-th iteration. By iteratively calculating the dual-perception similarity matrix ${\bf S}^{(i)}$ in (\ref{final sim}), we discover $|\hat{\mathcal{V}}_s^{(i)}|=|\hat{\mathcal{V}}_t^{(i)}|=N$ cross-network node pairs at each iteration, where $|\cdot|$ is the cardinality of the set, and $N>0$ is a positive integer. To this end, similarly as in \cite{du2019joint, zhang2016final}, we employ a ranking-based selection strategy that selects the top-$N$ elements in the matrix ${\bf S}^{(i)}$ for each iteration (which will be specified in Section \ref{sec 4.1}). Then, since the node mapping $\pi^{(i)}$ is updated to $\pi^{(i+1)}$ after $N$ node pairs are newly found, the matrix ${\bf S}_{Tve}^{(i+1)}$ is recalculated accordingly in order to update $\pi^{(i+2)}$. The process is repeated until all node correspondences (i.e., $M$ node pairs) are found. Specifically, the iteration ends when $i$ becomes $\ceil*{\frac{M}{N}}+1$, where $\ceil{\cdot}$ is the ceiling operator.\footnote{Note that the number of node pairs to be aligned at the last iteration may be less than $N$.}

The schematic overview of our \textsf{Grad-Align} is presented in Fig. \ref{fig:overview}. The two most consistent pairs (A, a) and (C, c) are aligned at the first step as the similarities of (A, a) and (C, c) in ${\bf S}^{(1)}$ are higher than those of other cross-network node pairs.

\section{Proposed \textsf{Grad-Align} Method}\label{section 4}
In this section, we elaborate on our \textsf{Grad-Align} method that gradually discovers the node correspondence using dual-perception similarities. We also analyze the computational complexity of \textsf{Grad-Align}. Furthermore, we present an enhanced version of \textsf{Grad-Align}, namely \textsf{Grad-Align-EA}, that incorporates the edge augmentation module into the original method.

\label{sec 4.1}
\begin{algorithm}[t]
\caption{: \textsf{Grad-Align}}
\label{mainalgorithm}
 \begin{algorithmic}[1]
  \renewcommand{\algorithmicrequire}{\textbf{Input:}}
  \renewcommand{\algorithmicensure}{\textbf{Output:}}
  \REQUIRE $G_s$, $G_t$, $\tilde{\mathcal{V}}_s^{(0)}$ ,$\tilde{\mathcal{V}}_t^{(0)}$
  \ENSURE $\pi^{(\ceil{\frac{M}{N}}+1)}$
  \STATE \textbf{Initialization: } $\theta \leftarrow \text{random initialization}$; $i \leftarrow 1$
  \STATE /* Calculation of ${\bf S}_{emb}$ */
   \WHILE {not converged}
  \STATE  $\mathbf{H}^{(1)}_s, \mathbf{H}^{(2)}_s, ..., \mathbf{H}^{(L)}_s\leftarrow \textsf{GNN}_{\theta}(G_s)$
  \STATE  $\mathbf{H}^{(1)}_t, \mathbf{H}^{(2)}_t, ..., \mathbf{H}^{(L)}_t\leftarrow \textsf{GNN}_{\theta}(G_t)$
  \STATE $\mathcal{L} \leftarrow$ \text{layer-wise reconstruction loss} in (\ref{layer-wise reconstruction loss})
  \STATE Update $\theta$ by taking one step of gradient descent
  \ENDWHILE
  \STATE $\mathbf{S}_{emb} \leftarrow  \sum^L_{l=1}\mathbf{H}_s^{(l)}{\mathbf{H}_t^{(l)}}^{\top}$
  \STATE /* Gradual matching by updating ${\bf S}_{Tve}^{(i)}$ */
  \IF {$\tilde{\mathcal{V}}_s^{(0)}=\tilde{\mathcal{V}}_t^{(0)}=\emptyset$}
    \STATE $\mathbf{S}^{(i)}\leftarrow \mathbf{S}_{emb}$
  \ELSE 
    \STATE Calculate $\mathbf{S}_{Tve}^{(i)}$ using (\ref{Tversky sim})
    \STATE $\mathbf{S}^{(i)}\leftarrow \mathbf{S}_{emb}\odot \mathbf{S}_{Tve}^{(i)}$
  \ENDIF
  
  \WHILE {$i\le \ceil{\frac{M}{N}}$} 
  \STATE Find $\hat{\mathcal{V}}_s^{(i)}$ and $\hat{\mathcal{V}}_t^{(i)}$ based on $\mathbf{S}^{(i)}$
  \STATE $i \leftarrow i + 1$
  \STATE Update mapping $\pi^{(i)}$: \\$\{ \pi^{(i)}(u)=v \mid u \in \hat{\mathcal{V}}_s^{(i-1)}, v \in \hat{\mathcal{V}}_t^{(i-1)}\}$
  \STATE Update ${\bf S}_{Tve}^{(i)}$ using (\ref{Tversky sim})
  \STATE ${\bf S}^{(i)} \leftarrow {\bf S}_{emb} \odot {\bf S}_{Tve}^{(i)}$

  \ENDWHILE
  \RETURN $\pi^{(\ceil{\frac{M}{N}}+1)}$
 \end{algorithmic}
\end{algorithm}

\subsection{Methodological Details of \textsf{Grad-Align}}
\label{sec 4.1}
We start by recalling that \textsf{Grad-Align} consists of three key phases, including 1) the multi-layer embedding similarity calculation, 2) the Tversky similarity calculation, and 3) the gradual node matching. The overall procedure of the proposed \textsf{Grad-Align} method is summarized in Algorithm \ref{mainalgorithm}, where $\theta$ and $\textsf{GNN}_\theta$ indicate the trainable model parameters and the GNN model parameterized by $\theta$.

First, let us describe how to calculate the multi-layer embedding similarity matrix ${\bf S}_{emb}$ using a GNN model. Since our \textsf{Grad-Align} method is {\em GNN-model-agnostic}, we show a general form of the message passing mechanism~\cite{gilmer2017neural,DBLP:conf/nips/HamiltonYL17,xu2018powerful} of GNNs in which we repeatedly update the representation of each node by aggregating representations of its neighbors using two functions, namely AGGREGATE and UPDATE. Formally, at the $l$-th hidden layer of a GNNs, $\text{AGGREGATE}^{l}_{u}$ aggregates (latent) feature information from the local neighborhood of node $u$ in a given graph as follows:
\begin{equation}
\label{aggregateequation}
    \mathbf{m}^{l}_{u} \leftarrow \text{AGGREGATE}^{l}_{u}(\{\mathbf{h}_{x_u}^{l-1}|x_u \in \mathcal{N}_{u} \cup u\}, \mathbf{W}^l_1),
\end{equation}
where ${\bf h}^{l-1}_{x_u} \in \mathbb{R}^{1 \times h}$ denotes the $h$-dimensional latent representation vector of node $x_u$ at the $(l-1)$-th hidden layer; $\mathcal{N}_{u}$ indicates the set of neighbor nodes of $u$; $\mathbf{W}^l_1 \in \mathbb{R}^{h \times h}$ is the learnable weight matrix at the $l$-th layer for the AGGREGATE step; and ${\bf m}_u^l$ is the aggregated information at the $l$-th layer. Here, ${\bf h}_u^0$ represents the node attribute vector of node $u$. In the UPDATE step, the latent representation at the next hidden layer is updated by using each node and its aggregated information from $\text{AGGREGATE}^{l}_{u}$ as follows:
    \begin{equation}
    \label{UPDATEequation}
        \mathbf{h}^{l}_{u} \leftarrow \text{UPDATE}^{l}_{u}(u,\mathbf{m}^{l}_{u},\mathbf{W}^l_2),
    \end{equation}
where $\mathbf{W}^l_2 \in \mathbb{R}^{h^m \times h^{l}}$ is the learnable weight matrix at the $l$-th layer for the UPDATE step.\footnote{Note that the two learnable weight matrices ${\bf W}_1^l$ and ${\bf W}_2^l$ can be excluded depending on a specific choice of GNNs.} The above two functions AGGREGATE and UPDATE in (\ref{aggregateequation}) and (\ref{UPDATEequation}), respectively, can be specified by several milestone GNN models such as GCN \cite{DBLP:conf/iclr/KipfW17}, GraphSAGE \cite{DBLP:conf/nips/HamiltonYL17}, and GIN \cite{xu2018powerful}. Meanwhile, there is a practical challenge on implementing GNN model architectures aimed at performing the NA task.  Specifically, since most of existing GNN models are not originally designed for learning node representations in multiple networks, a direct application of GNNs to the two networks $G_s$ and $G_t$ may not generate desirable node embeddings along with structural and attribute consistencies of nodes \cite{trung2020adaptive}. In other words, while a pair of nodes across two networks has the same neighborhood structure and/or attribute information, their representations may fail to be mapped closely together into the embedding space. To solve this problem, we employ the technique of sharing weight matrices between two GNN models for $G_s$ and $G_t$ (refer to lines 4--5 in Algorithm \ref{mainalgorithm}). We would like to establish the following proposition, which states that the weight-sharing technique in GNNs naturally supports the consistency of cross-network nodes in the embedding space in the context of NA.
    \begin{proposition}
    \label{proposition_weightshare}
     Suppose that two networks $G_s$ and $G_t$ are isomorphic,\footnote{Two networks $G_s$ and $G_t$ are isomorphic when there exists a function $f: \mathcal{V}_s \rightarrow \mathcal{V}_t$ such that any two nodes $u_1$ and $u_2$ of $\mathcal{V}_s$ are adjacent in $G_s$ if and only if $f(u_1)$ and $f(u_2)$ are adjacent in $G_t$ \cite{mckay2014practical}.} where any of node pairs $(u,v)$ matched by ground truth node mapping $\pi: \mathcal{V}_s \rightarrow \mathcal{V}_t$ have the same node attributes $\mathbf{h}^{0}_{u} = \mathbf{h}^{0}_{v}$. Given two GNN models for $G_s$ and $G_t$, if the weight matrices $\{\mathbf{W}^{p}_1\}_{p=1,\cdots,L}$ and $\{\mathbf{W}^{p}_2\}_{p=1,\cdots,L}$ are shared, then it follows that $\mathbf{h}^{p}_{u} = \mathbf{h}^{p}_{v}$ for all positive integers $p$, where $L$ indicates the number of GNN layers.
    \end{proposition}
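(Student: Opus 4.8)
The plan is to argue by induction on the layer index $p$, establishing that the equality $\mathbf{h}^{p}_{u} = \mathbf{h}^{p}_{v}$ propagates upward through the GNN layers for every ground-truth matched pair $(u,v)$ with $v = \pi(u)$. The base case is immediate: at $p=0$ the hidden representation is just the node attribute vector, so $\mathbf{h}^{0}_{u} = \mathbf{h}^{0}_{v}$ holds directly from the hypothesis that matched pairs share the same attributes. This also supplies the $p=1$ starting point needed for the statement.

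For the inductive step, I would assume $\mathbf{h}^{p-1}_{x} = \mathbf{h}^{p-1}_{\pi(x)}$ for all matched pairs and then inspect the AGGREGATE step in (\ref{aggregateequation}). The crucial structural fact is that, since $G_s$ and $G_t$ are isomorphic under the ground-truth mapping $\pi$ (identified with the isomorphism $f$ of the footnote), $\pi$ restricts to a bijection between the one-hop neighborhood $\mathcal{N}_{u}$ and $\mathcal{N}_{v} = \mathcal{N}_{\pi(u)}$. Combining this neighborhood bijection with the induction hypothesis shows that the two input multisets $\{\mathbf{h}^{p-1}_{x_u} : x_u \in \mathcal{N}_{u} \cup \{u\}\}$ and $\{\mathbf{h}^{p-1}_{x_v} : x_v \in \mathcal{N}_{v} \cup \{v\}\}$ coincide element-for-element. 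Because the weight matrix $\mathbf{W}^{p}_1$ is shared across the two networks, the two AGGREGATE operators are literally the same function, and invoking its permutation-invariance over these identical input multisets yields $\mathbf{m}^{p}_{u} = \mathbf{m}^{p}_{v}$.

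I would then push this equality through the UPDATE step in (\ref{UPDATEequation}). With the shared weight $\mathbf{W}^{p}_2$, the UPDATE operator is likewise identical for both networks, and it is fed the equal aggregated messages $\mathbf{m}^{p}_{u} = \mathbf{m}^{p}_{v}$ together with the node's own previous representation, which satisfies $\mathbf{h}^{p-1}_{u} = \mathbf{h}^{p-1}_{v}$ by the induction hypothesis. Hence $\mathbf{h}^{p}_{u} = \mathbf{h}^{p}_{v}$, which closes the induction and establishes the claim for all positive integers $p$.

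I expect the delicate part to be justifying that the AGGREGATE step genuinely produces identical outputs, since this rests on two facts that should be made explicit: the permutation-invariance of AGGREGATE (so that it is the equality of the two neighbor multisets, not any ordering of them, that matters) and the convention that UPDATE depends on a node only through its feature representation rather than through its identity label. I would also flag, up front, the tacit identification of the ground-truth matching $\pi$ with the graph isomorphism $f$, since the neighborhood-preserving bijection $\pi(\mathcal{N}_{u}) = \mathcal{N}_{\pi(u)}$ is precisely the ingredient that drives the entire argument.
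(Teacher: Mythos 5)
Your proof is correct and follows essentially the same route as the paper's: induction over the GNN layers, using the isomorphism together with the equal-attribute hypothesis to identify the neighbor multisets, and the shared weights to make the AGGREGATE and UPDATE operators literally the same functions on both networks. If anything, you are more careful than the paper's terse inductive step, since you correctly strengthen the induction hypothesis to hold for \emph{all} matched pairs --- which is genuinely needed, because the layer-$p$ representation of $u$ depends on the layer-$(p-1)$ representations of all of $u$'s neighbors, not just of $u$ itself.
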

    \begin{proof}
    We prove this proposition by the mathematical induction for all layers $p=1,\cdots,L$, where $p=1$ and $p>2$ correspond to the base step and the consecutive inductive steps, respectively.
    \\{\bf Base step:} For node $u\in\mathcal{V}_s$, it follows that
    \begin{equation}
    \label{GNNs_1}
    \begin{aligned}
            \mathbf{m}^{1}_{u} &\leftarrow \text{AGGREGATE}^{1}_{u}(\{\mathbf{h}_{x}^{0}|x \in \mathcal{N}_{u} \cup u\}, \mathbf{W}^1_1)\\
        \mathbf{h}^{1}_{u} &\leftarrow \text{UPDATE}^{1}_{u}(u,\mathbf{m}^{1}_{u},\mathbf{W}^1_2).
    \end{aligned}
    \end{equation}
    For node $v\in\mathcal{V}_t$, we have
    \begin{equation}
    \label{GNNt_1}
    \begin{aligned}
            \mathbf{m}^{1}_{v} &\leftarrow \text{AGGREGATE}^{1}_{v}(\{\mathbf{h}_{y}^{0}|y \in \mathcal{N}_{v} \cup v\}, \mathbf{W}^1_1)\\
        \mathbf{h}^{1}_{v} &\leftarrow \text{UPDATE}^{1}_{v}(v,\mathbf{m}^{1}_{v},\mathbf{W}^1_2).
    \end{aligned}
    \end{equation}
     From (\ref{GNNs_1}) and (\ref{GNNt_1}), the equality $\{\mathbf{h}_{x}^{0}|x \in \mathcal{N}_{u} \cup u\}=\{\mathbf{h}_{y}^{0}|y \in \mathcal{N}_{v} \cup v\}$ is met due to the assumptions that $G_s$ and $G_t$ are isomorphic and ${\bf h}_u^0={\bf h}_v^0$, thus resulting in ${\bf m}_u^1={\bf m}_v^1$ from the AGGREGATE function. Using the UPDATE function finally yields ${\bf h}_u^1={\bf h}_v^1$.
    \\{\bf Inductive step:} Suppose that
    \begin{equation}
    \label{suppose_gnn}
        \mathbf{h}^{p}_{u} = \mathbf{h}^{p}_{v}
    \end{equation}
    for $p=k$, where $k\in\{1,\cdots,L-1\}$. Then, it is not difficult to show that (\ref{suppose_gnn}) also holds for $p=k+1$ using (\ref{aggregateequation}) and (\ref{UPDATEequation}), which completes the proof of this proposition. 
    \end{proof}

Now, we describe how to train GNN model parameters $\theta$. Unlike prior studies including \cite{man2016predict, zhou2018deeplink}, which utilized supervision data to learn the relationship between distinct latent spaces, we use the weight-sharing technique in GNNs that inherently supports the node consistency across two different networks in the embedding space (refer to Proposition \ref{proposition_weightshare}). Thus, in our study, we train the parameters of GNNs unsupervisedly, similarly as in \cite{trung2020adaptive}. From the fact that the hidden representation at the $l$-th GNN layer contains the collective information of up to $l$-hop neighbors of nodes \cite{zhou2020graph, zhang2020deep, DBLP:conf/iclr/KipfW17, DBLP:conf/nips/HamiltonYL17}, we make use of the adjacency matrix and its powers up to order $l$ along with the $l$-th layer hidden representation. To this end, we present our new {\em layer-wise reconstruction loss} function for training the GNN model as follows (refer to line 6):
\begin{equation}
\label{layer-wise reconstruction loss}
    \mathcal{L} = \sum_{* \in \{s,t\}}\sum_l \left\|\tilde{D}_*^{(l)-\frac{1}{2}}\tilde{A}_*^{(l)}\tilde{D}_*^{(l)-\frac{1}{2}}-H_*^{(l)}{H_*^{(l)}}^{\top}\right\|_F,
\end{equation}
where the subscript * represents $s$ and $t$ for source and target networks, respectively; $\tilde{A}_*^{(l)} =\sum_{k=1}^l\hat{A}_*^{k}$ where $\hat{A}_*=A_*+I_*$ is the adjacency matrix with self-connections in which $I_*$ is the identity matrix; $\tilde{D}_*^{(l)}$ is a diagonal matrix whose $(i,i)$-th element is $\tilde{D}_*^{(l)}(i,i) = \sum_j\tilde{A}_*^{(l)}(i,j)$ where $\tilde{A}_*^{(l)}(i,j)$ is the $(i,j)$-th element of $\tilde{A}_*^{(l)}$; and $\|\cdot\|_F$ is the Frobenius norm of a matrix. By training the GNN model via our layer-wise reconstruction loss in (\ref{layer-wise reconstruction loss}), we are capable of precisely capturing the first-order and higher-order neighborhood structures. Note that, unlike the prior study (e.g., \cite{trung2020adaptive}), we use the {\em aggregated} matrix $\tilde{A}_*^{(l)}$ to reconstruct the underlying network from each hidden representation. 

Next, we turn to explaining how to calculate the Tversky similarity matrix ${\bf S}_{Tve}^{(i)}$. Without loss of generality, we assume that $n_s \ge n_t$ for the Tversky similarity calculation below. Then, given a cross-network node pair $(u,v)$, we evaluate how many ACNs are shared between two sets $\mathcal{N}_{G_s,u}$ and $\mathcal{N}_{G_t,v}$, indicating one-hop neighbors of nodes $u\in\mathcal{V}_s$ and $v\in\mathcal{V}_t$, respectively, upon the one-to-one node mapping $\pi^{(i)}$. To this end, we characterize the Tversky similarity ${\bf S}_{Tve}^{(i)}(u,v)$ between two nodes $u\in\mathcal{V}_s$ and $v\in\mathcal{V}_t$ at the $i$-th iteration by means of the mapping $\pi^{(i)}$, which is formulated as:
\begin{equation}
\begin{aligned}
\label{Tversky sim}
&\mathbf{S}_{Tve}^{(i)}(u,v) \\&= \frac{|X_u^{(i)} \cap Y_v^{(i)}|}{|X_u^{(i)} \cap Y_v^{(i)}|+\alpha|X_u^{(i)}-Y_v^{(i)}|+\beta|Y_v^{(i)}-X_u^{(i)}|}, 
\end{aligned}
\end{equation}
where 
\begin{equation}
\begin{aligned}
\label{Tversky sim comp}
\mathcal{T}_u^{(i)} &= \left\{\pi^{(i)}(x) \mid x \in \left(\mathcal{N}_{G_s,u} \cap \tilde{\mathcal{V}}_s^{(i)}\right)\right\} \\
X_u^{(i)} &= \left(\mathcal{N}_{G_s,u}-\tilde{\mathcal{V}}_s^{(i)}\right) \cup \mathcal{T}_u^{(i)}\\
Y_v^{(i)} &= \mathcal{N}_{G_t,v}. \\
\end{aligned}    
\end{equation}
Here, $\tilde{\mathcal{V}}_s^{(i)}$ is the set of {\em seed nodes} at the $i$-th iteration of gradual alignment in $G_s$, which is defined as the union of already aligned node pairs up to the $i$-th iteration and $\tilde{\mathcal{V}}_s^{(0)}$ (i.e., prior seed nodes in $G_s$); and $\alpha,\beta>0$ are the parameters of the Tversky index. In our study, we set $0<\alpha<1$ and $\beta=1$ under the condition $n_s\ge n_t$, where the appropriate value of $\alpha$ will be numerically found in Section \ref{sec5.5.2}. It is worth noting that the case of $\alpha = \beta = 1$ corresponds to the Tanimoto coefficient (also known as the Jaccard index) \cite{tversky1977features}, which can be seen as a symmetric set similarity measure. In particular, we would like to address the importance of the parameter setting in our Tversky similarity. 
\begin{remark}
The parameter $\alpha$ in (\ref{Tversky sim}) plays a crucial role in balancing between $|X_u^{(i)} \cap Y_v^{(i)}|$ (i.e., the number of ACNs of node pair $(u,v)$) and $|X_u^{(i)} - Y_v^{(i)}|$ (i.e., the  cardinality of the set difference of $X_u^{(i)}$ and $Y_v^{(i)}$). More specially, for
$n_s \gg n_t$, a number of nodes in $G_s$ remain unaligned even when the alignment process is completed; thus, the term $|X_u^{(i)} \cap Y_v^{(i)}|$ would be far smaller than $|X_u^{(i)} - Y_v^{(i)}|$.\footnote{We emprically validated that over 99\% out of all node pairs meet this condition (i.e., $|X_u^{(i)} \cap Y_v^{(i)}| \ll |X_u^{(i)} - Y_v^{(i)}|$) in the gradual alignment process in three real-world datasets under consideration.} In this case, we are capable of more precisely calculating the similarity between node pairs of two networks that are severely imbalanced during the gradual alignment by adjusting the value of $\alpha$ accordingly. By setting $0<\alpha<1$, we can mitigate the impact of $|X_u^{(i)}-Y_v^{(i)}|$ so that $|X_u^{(i)} \cap Y_v^{(i)}|$ is relatively influential.
\end{remark} 
Since the number of ACNs is expected to increase over iterations in gradual alignment, we are interested in investigating how fast the Tversky similarity ${\bf S}_{Tve}^{(i)}(u,v)$ grows with respect to the number of ACNs. To see the impact and benefits of the Tversky similarity in terms of its growth rate, we analytically show the effectiveness of the Tversky similarity with a proper parameter setting by establishing the following theorem. 
\begin{theorem} 
\label{theorem_tversky}
Given a node pair $(u,v)$, suppose that $|Y_v^{(i)}-X_u^{(i)}| \ge |X_u^{(i)} \cap Y_v^{(i)}|$ for all gradual alignment steps. Then, under the condition $n_s\ge n_t$, the growth rate of our Tversky similarity ${\bf S}_{Tve}^{(i)}(u,v)$ using $0<\alpha<1$ and $\beta=1$ with respect to $|X_u^{(i)} \cap Y_v^{(i)}|$ (i.e., the number of ACNs of node pair $(u,v)$ at the $i$-th iteration) is always higher than that of the Jaccard index using $\alpha=\beta=1$.
\end{theorem}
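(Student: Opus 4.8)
The plan is to reduce the comparison of the two similarity measures to a one–variable monotonicity statement. Fix an iteration $i$ and a pair $(u,v)$, and abbreviate $a:=|X_u^{(i)}\cap Y_v^{(i)}|$ (the number of ACNs), $b:=|X_u^{(i)}-Y_v^{(i)}|$, and $c:=|Y_v^{(i)}-X_u^{(i)}|$, so that the hypothesis reads $c\ge a$. With $\beta=1$ the Tversky similarity in (\ref{Tversky sim}) is $\frac{a}{a+\alpha b+c}$, whereas the Jaccard index ($\alpha=\beta=1$) is $\frac{a}{a+b+c}$. Reading ``growing rate with respect to $|X_u^{(i)}\cap Y_v^{(i)}|$'' as the derivative in $a$ with $b,c$ held fixed, the two rates are
\[
\frac{\partial}{\partial a}\frac{a}{a+\alpha b+c}=\frac{\alpha b+c}{(a+\alpha b+c)^2},\qquad
\frac{\partial}{\partial a}\frac{a}{a+b+c}=\frac{b+c}{(a+b+c)^2}.
\]

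First I would unify both rates into a single function of the Tversky weight. Define $g(t):=\frac{tb+c}{(a+tb+c)^2}$ for $t\in[\alpha,1]$, so that the Tversky rate equals $g(\alpha)$ and the Jaccard rate equals $g(1)$; the assertion is precisely $g(\alpha)>g(1)$. A routine differentiation gives
\[
g'(t)=\frac{b\,(a-c-tb)}{(a+tb+c)^3}.
\]
Showing that $g$ decreases on $[\alpha,1]$ then immediately yields $g(\alpha)>g(1)$, since $\alpha<1$.

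The key step is the sign analysis of $g'$, and this is exactly where the hypothesis enters. Because $t\ge 0$ and $b\ge 0$, we have $tb+c\ge c$, so the assumption $c\ge a$ forces $a-c-tb\le a-c\le 0$ and hence $g'(t)\le 0$ throughout $[\alpha,1]$; this already gives $g(\alpha)\ge g(1)$. For the strict inequality one uses that in the imbalanced regime of interest $b>0$ (node $u$ has source-side neighbors that are not matched into $\mathcal{N}_{G_t,v}$). Then $a-c-tb<0$ for every $t\in[\alpha,1]$ when $c>a$, and for every $t\in(0,1]$ (hence all of $[\alpha,1]$, as $\alpha>0$) when $c=a$; in either case $g$ strictly decreases across $[\alpha,1]$ and $g(\alpha)>g(1)$, which is the claim.

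I expect the only subtlety, rather than a genuine obstacle, to be twofold: choosing the interpolating function $g(t)$ so that the two–measure comparison collapses to a single monotonicity fact, and pinning down precisely where $c\ge a$ is used, namely to guarantee $a-c-tb\le 0$. I would also emphasize that this hypothesis is essential and not cosmetic: if $a$ is large relative to $c$ the sign of $g'$ flips and the Jaccard rate can dominate (e.g.\ $a=100$, $b=c=1$ gives $g(\alpha)<g(1)$), so the condition $|Y_v^{(i)}-X_u^{(i)}|\ge|X_u^{(i)}\cap Y_v^{(i)}|$ is exactly what the argument needs and guards against overclaiming.
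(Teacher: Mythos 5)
Your proof is correct and takes essentially the same route as the paper's: both compute the derivative of each similarity with respect to the intersection size and reduce the comparison to the fact that $w \mapsto w/(w+c)^2$ is decreasing for $w \ge c$ --- the paper by algebraically factoring the difference of the two derivatives as $\frac{(w_1-w_2)(c^2-w_1w_2)}{(w_1+c)^2(w_2+c)^2}$ with $w_1=a+b$, $w_2=\alpha a+b$, you by differentiating an interpolation $g(t)$ in the Tversky weight. Your additional remarks on strictness (needing $|X_u^{(i)}-Y_v^{(i)}|>0$, without which the two rates coincide) and the counterexample showing the hypothesis $|Y_v^{(i)}-X_u^{(i)}|\ge|X_u^{(i)}\cap Y_v^{(i)}|$ is genuinely needed go slightly beyond the paper, whose argument only establishes the non-strict inequality.
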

\begin{proof}
Let $\mathbf{S}^{(i)}_{Jac}(u,v)$ denote the Jaccard index between two nodes $u\in\mathcal{V}_s$ and $v\in\mathcal{V}_t$ in the $i$-th iteration. For notational convenience, by denoting $a=|X_u^{(i)}-Y_v^{(i)}|$, $b=|Y_v^{(i)}-X_u^{(i)}|$, and $c=|X_u^{(i)} \cap Y_v^{(i)}|$ for $b \ge c \ge 0$, we have
\begin{equation}
\begin{aligned}
\label{lemma eq1}
    \mathbf{S}^{(i)}_{Tve}(u,v) &= \frac{c}{\alpha a+b+c} = 1 - \frac{\alpha a+b}{\alpha a+b+c} \\
    \mathbf{S}^{(i)}_{Jac}(u,v) &= \frac{c}{a+b+c} = 1 - \frac{a+b}{a+b+c}.
\end{aligned}
\end{equation}

Then, the difference between the growth rates of ${\bf S}_{Tve}^{(i)}(u,v)$ and ${\bf S}_{Jac}^{(i)}(u,v)$ with respect to $c$ is given by
\begin{equation}
\begin{aligned}
\label{lemma eq3}
    \frac{\partial{\mathbf{S}^{(i)}_{Jac}(u,v)}}{\partial{c}} - \frac{\partial{\mathbf{S}^{(i)}_{Tve}(u,v)}}{\partial{c}} = \frac{w_1}{(w_1+c)^2} - \frac{w_2}{(w_2+c)^2} \\ = \frac{(w_1-w_2)(c^2-w_1w_2)}{(w_1+c)^2(w_2+c)^2},
\end{aligned}
\end{equation}
where $w_1=a+b$ and $w_2=\alpha a+b$. Here, due to the fact that $w_1-w_2=(1-\alpha)a \ge 0$, $w_1 \ge c$, and $w_2 \ge c$, it follows that
\begin{equation}
\begin{aligned}
\label{lemma eq4}
    \frac{\partial{\mathbf{S}^{(i)}_{Tve}(u,v)}}{\partial{c}} \ge \frac{\partial{\mathbf{S}^{(i)}_{Jac}(u,v)}}{\partial{c}},
\end{aligned}
\end{equation}
which completes the proof of this theorem.
\end{proof}
By Theorem \ref{theorem_tversky}, using the parameter setting $0<\alpha<1$ and $\beta=1$ in the Tversky similarity facilitates the increase of the similarity level during the gradual alignment, thus better capturing the topological consistency. In other words, we can make full use of the benefits of gradual alignment along with high growth rates of Tversky similarity with respect to $|X_u^{(i)}\cap Y_v^{(i)}|$ over iterations, which thereby results in improved alignment accuracy.

\begin{example}
We show how using the Tversky similarity ${\bf S}_{Tve}^{(i)}(u,v)$ is beneficial over the Jaccard index ${\bf S}_{Jac}^{(i)}(u,v)$ in terms of capturing the topological consistency of cross-network node pairs. As illustrated in Fig. \ref{fig:Example1}, consider two networks $G_s$ and $G_t$ with $n_s=6$ and $n_t=4$, where there is a single matched pair (B,b) such that $\mathcal{T}_A^{(1)}=\{b\}$, $X_A^{(1)}=\{b,C,D,E,F\}$, $Y_a^{(1)}=\{b,c,d\}$, and  $X_A^{(1)} \cap Y_a^{(1)} = \{b\}$. Suppose that $\alpha = \frac{1}{2}$ and $\beta = 1$. Then, we have ${\bf S}_{Tve}^{(1)}(A,a)=\frac{1}{5}$ and ${\bf S}_{Jac}^{(1)}(A,a)=\frac{1}{7}$. At the next gradual alignment step, we assume that (C,c) is newly aligned, which leads to $X_A^{(2)}=\{b,c,D,E,F\}, Y_a^{(2)}=\{b,c,d\},$ and  $X_A^{(2)} \cap Y_a^{(2)} = \{b,c\}$. Then, it follows that ${\bf S}_{Tve}^{(2)}(A,a)=\frac{4}{9}$ and ${\bf S}_{Jac}^{(2)}(A,a)=\frac{2}{7}$. Since the growth rate of the Tversky similarity and the Jaccard index with respect to the number of ACNs is given by $\frac{11}{45}$ and $\frac{1}{7}$, respectively, using the Tversky similarity is beneficial in terms of better capturing the structural consistency across different networks, which finally leads to performance improvement.
\end{example}
\begin{figure}
    \centering
    \includegraphics[scale=0.12]{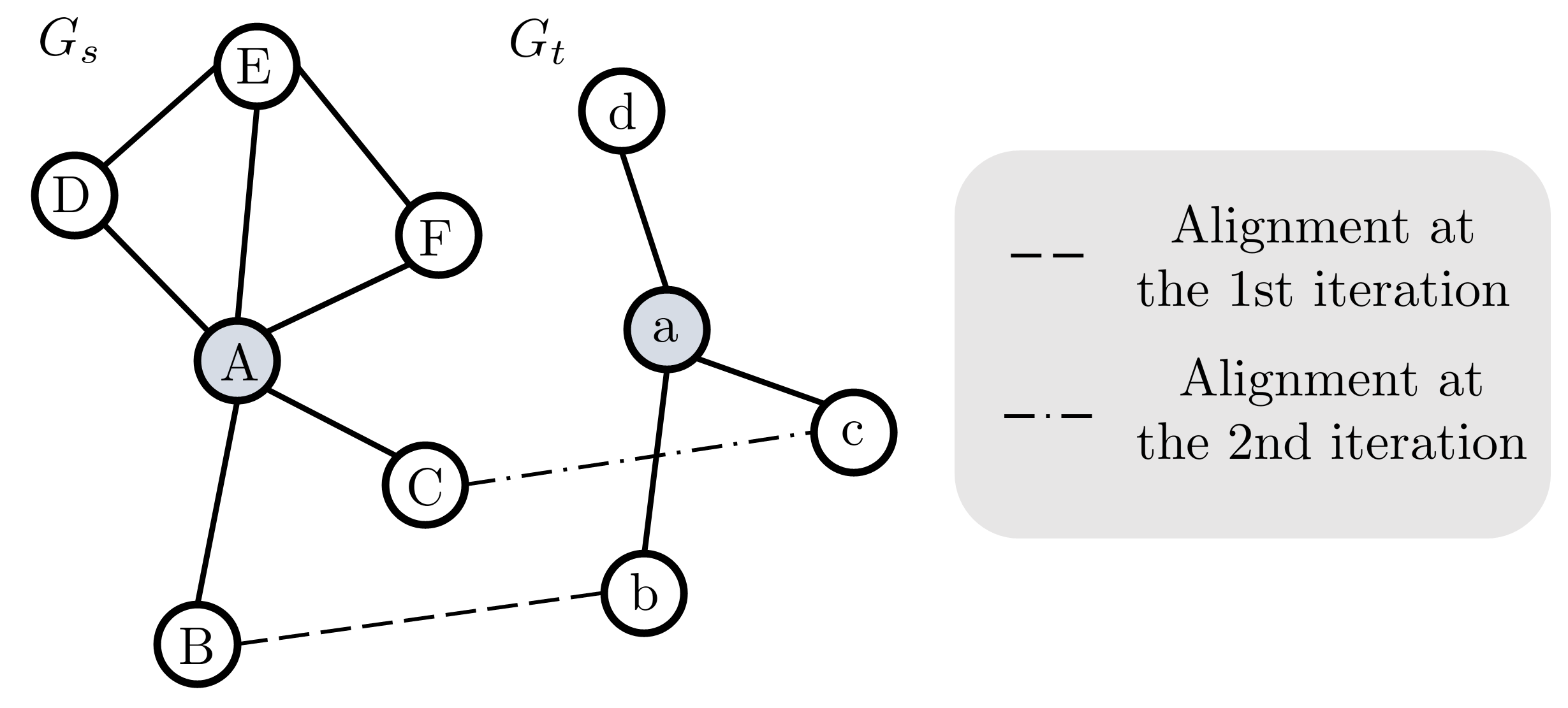}
    \caption{An example illustrating gradual alignment of two networks $G_s$ and $G_t$ with $n_s=6$ and $n_t=4$.}
    \label{fig:Example1}
\end{figure}    

After calculating our dual-perception similarity matrix ${\bf S}^{(i)}$ in (\ref{final sim}) based on two similarity matrices ${\bf S}_{emb}$ and ${\bf S}_{Tve}^{(i)}$, we gradually match $N$ node pairs for each iteration until all $M$ node pairs are found. Let us explain how to choose $N$ node pairs per iteration as follows. At the first iteration, we find the most confident pair (i.e., an element with the highest value) out of $n_s n_t$ elements in ${\bf S}^{(1)}$ and then find the second most confident pair among $(n_s-1)(n_t-1)$ elements after deleting all elements related to the most confident pair. We repeatedly perform the above steps until $N$ node pairs are matched for the first iteration. Updating the node mapping $\pi^{(i)}$, we recalculate ${\bf S}_{Tve}^{(i+1)}$ to discover $N$ node pairs at the next iteration. This process is repeated at each iteration. We refer to lines 18--22 in Algorithm \ref{mainalgorithm} for calculation of ${\bf S}^{(i+1)}$ upon the updated one-to-one node mapping $\pi^{(i+1)}$. 

\subsection{Complexity Analysis}
\label{sec 4.2}
In this subsection, we analyze the computational complexity of \textsf{Grad-Align}. When an untrained GNN model is used, the complexity of the multi-layer embedding similarity can be analyzed below. In the feed-forward process, the computational complexity of message passing over all GNN layers is given by $\mathcal{O}(L\max\{|\mathcal{E}_s|,|\mathcal{E}_t|\})$ \cite{wu2020comprehensive}, where $L$ is the number of GNN layers, and $\mathcal{E}_s$ and $\mathcal{E}_t$ are the numbers of edges in $G_s$ and $G_t$, respectively. While the element-wise calculation of the similarity matrix ${\bf S}_{emb}$ in (\ref{emb_sim}) is repeated $n_s n_t$ times \cite{du2019joint}, this process is computable in constant time when parallelization is applied. Now, we are ready to show the following theorem, which states a comprehensive analysis of the total complexity.
\begin{theorem}
\label{theorem_compelxity}
The computational complexity of the proposed \textsf{Grad-Align} method is given by $\mathcal{O}(\max\{|\mathcal{E}_s|,|\mathcal{E}_t|\})$.
\end{theorem}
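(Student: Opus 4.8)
The plan is to bound the cost of each of the three phases of Algorithm~\ref{mainalgorithm} separately---(i) GNN training together with the construction of the embedding-similarity matrix (lines~3--9), (ii) the Tversky-similarity computation (lines~13 and~21), and (iii) the gradual-matching loop (lines~17--23)---and then show that the GNN feed-forward term dominates every other contribution, collapsing the total to $\mathcal{O}(\max\{|\mathcal{E}_s|,|\mathcal{E}_t|\})$.

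For phase~(i), I would invoke the fact recalled immediately before the theorem statement, namely that one feed-forward pass of message passing over all layers costs $\mathcal{O}(L\max\{|\mathcal{E}_s|,|\mathcal{E}_t|\})$. Treating the number of layers $L$ and the number of training epochs as fixed hyperparameters independent of network size, the entire training loop inherits this $\mathcal{O}(\max\{|\mathcal{E}_s|,|\mathcal{E}_t|\})$ bound. The assembly of $\mathbf{S}_{emb}$ in~(\ref{emb_sim}) requires $n_s n_t$ element-wise operations, but, as already noted, this reduces to constant time under parallelization and hence does not enter the dominant term.

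For phases~(ii) and~(iii), I would first observe that each entry $\mathbf{S}_{Tve}^{(i)}(u,v)$ in~(\ref{Tversky sim}) is obtained from set operations on the one-hop neighbor sets $\mathcal{N}_{G_s,u}$ and $\mathcal{N}_{G_t,v}$, whose sizes are node degrees; summed over all pairs these operations are controlled by the edge counts, and under the same parallelization assumption used for $\mathbf{S}_{emb}$ a full recomputation of the matrix is constant-time. The gradual loop runs $\ceil{\frac{M}{N}}$ iterations, and within each one the top-$N$ ranking selection and the update of $\pi^{(i)}$ again scan the $n_s n_t$ entries of $\mathbf{S}^{(i)}$, constant under parallelization. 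Bounding $M\le\min\{n_s,n_t\}$ and treating $N$ as a constant, this loop overhead is absorbed into the dominant GNN term, so that summing the three phases and keeping only the leading contribution yields $\mathcal{O}(\max\{|\mathcal{E}_s|,|\mathcal{E}_t|\})$.

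The main obstacle I anticipate is making the parallelization argument rigorous and applying it uniformly across all matrix-level operations: the $n_s n_t$-size element-wise products, the per-pair Tversky set computations, and the repeated top-$N$ selections each naively cost $\mathcal{O}(n_s n_t)$ or more, so the claimed edge-linear bound rests entirely on the assumption---already adopted for $\mathbf{S}_{emb}$---that these embarrassingly parallel operations execute in constant time. The delicate point is ensuring that the $\ceil{\frac{M}{N}}$-fold repetition of the matching loop does not reintroduce a factor exceeding the edge count; here I would lean on $M/N$ being at most linear in the smaller node set, which for connected networks is no larger than $\max\{|\mathcal{E}_s|,|\mathcal{E}_t|\}$.
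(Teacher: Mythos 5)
Your proposal is correct and follows essentially the same route as the paper's own proof: both bound the message-passing cost by $\mathcal{O}(L\max\{|\mathcal{E}_s|,|\mathcal{E}_t|\})$, bound the per-pair Tversky computation by node degrees treated as constants, invoke the same parallelization assumption to make the $n_s n_t$ matrix-level operations constant-time, and absorb the finitely many gradual-matching iterations into the dominant embedding term. Your explicit bound $M\le\min\{n_s,n_t\}$ on the loop count and your treatment of training epochs as a fixed hyperparameter are slightly more careful than the paper's phrasing, but they do not change the argument.
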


\begin{proof}
For a given node pair $(u,v)$, we focus on analyzing the computational complexity of the Tversky similarity calculation. Since we take into account one-hop neighbors of each node in handling each term in (\ref{Tversky sim}), the complexity of calculating $|X_u^{(i)}-Y_v^{(i)}|$, $|Y_v^{(i)}-X_u^{(i)}|$, and $|X_u^{(i)} \cap Y_v^{(i)}|$ is given by $\mathcal{O}(|X_u^{(i)}|)$, $\mathcal{O}(|Y_v^{(i)}|)$, and $\mathcal{O}(|X_u^{(i)}||Y_v^{(i)}|)$, respectively, for the worst case, which are 
bounded by the maximum node degree in $G_s$ and $G_t$ and thus are regarded as a constant. As in the calculation of ${\bf S}_{emb}$, the Tversky similarity matrix ${\bf S}_{Tve}^{(i)}$ is computable in constant time due to the independent element-wise calculation via parallel processing. From the fact that the number of iterations for gradual alignment is finite and the computation of ${\bf S}_{emb}$ is a bottleneck,
the total complexity of our \textsf{Grad-Align} method is finally bounded by $\mathcal{O}(\max\{|\mathcal{E}_s|,|\mathcal{E}_t|\})$. This completes the proof of this theorem.
\end{proof}

From Theorem \ref{theorem_compelxity}, one can see that the computational complexity of \textsf{Grad-Align} scales linearly with the maximum number of edges over two networks. This also implies that the computational complexity of \textsf{Grad-Align} scales no larger than that of the embedding similarity calculation using the underlying GNN model.

\subsection{\textsf{Grad-Align-EA} Method}
\label{sec 4.3}
\begin{figure}
    \centering
    \includegraphics[scale=0.055]{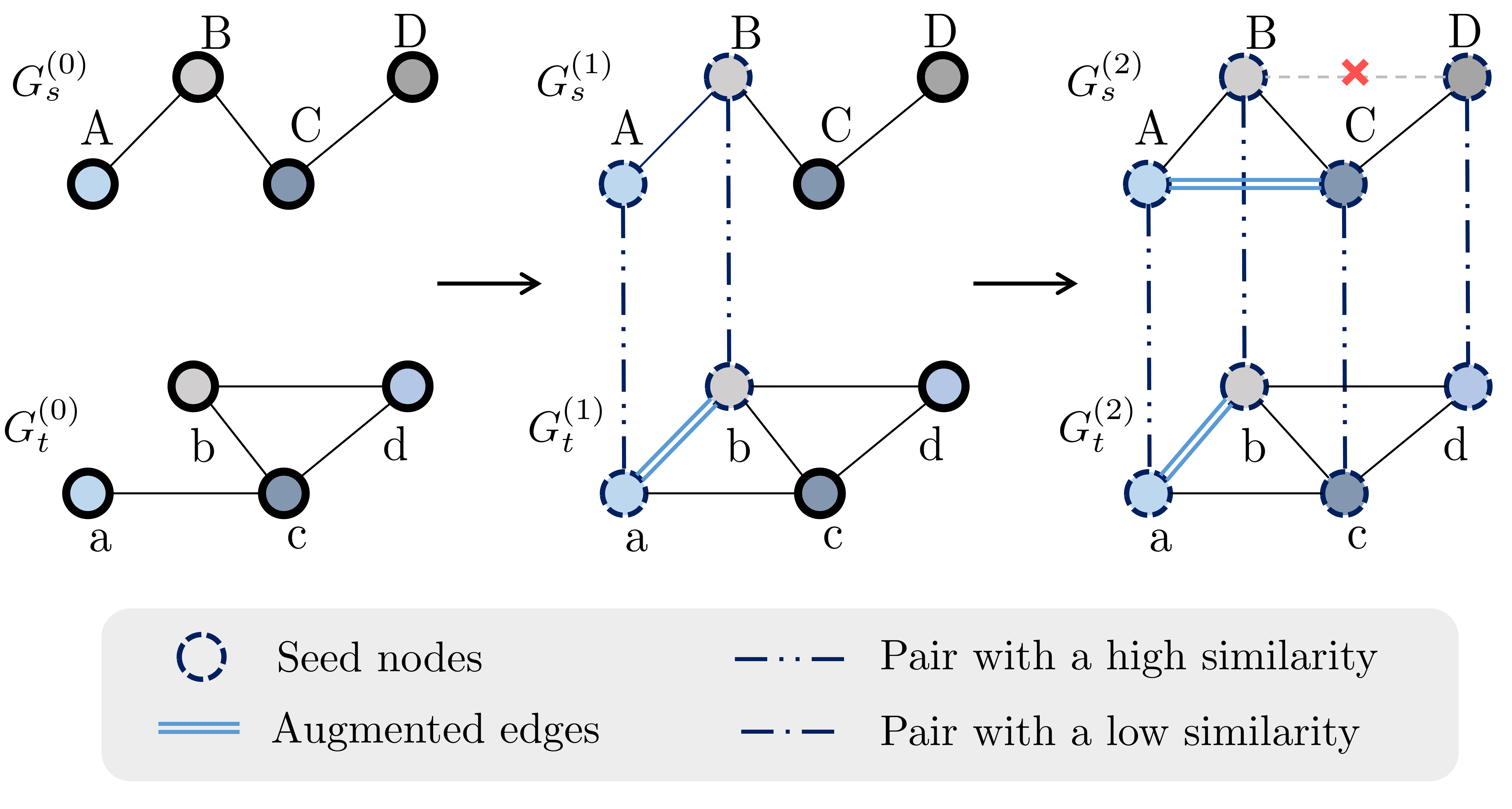}
    \caption{An example illustrating edge augmentation over iterations. Here, the edge between nodes $(a,b)$ in $G_t$ is augmented at the first step, and the edge between nodes $(A,C)$ is augmented at the second step.}
    \label{cea fig}
\end{figure}   
In this subsection, we introduce \textsf{Grad-Align-EA}, namely \textsf{Grad-Align} with edge augmentation, to further improve the performance of \textsf{Grad-Align} by incorporating an edge augmentation module into the \textsf{Grad-Align} method, which is inspired by CENALP \cite{du2019joint} that jointly performs NA and link prediction. However, different from the link prediction in CENALP that requires the high computational cost, our edge augmentation only creates edges that are {\em highly confident} without any training and test phases. The edge augmentation makes both networks $G_s$ and $G_t$ evolve in such a way that their structural consistencies across the networks are getting high over iterations in the gradual alignment process. To express the evolution of each network, we rewrite source and target networks as $G^{(i)}_s = (\mathcal{V}_s,\mathcal{E}^{(i)}_s,\mathcal{X}_s)$, $G_t^{(i)} = (\mathcal{V}_t, \mathcal{E}^{(i)}_t,\mathcal{X}_t)$, respectively, where $\mathcal{E}_s^{(i)}$ and $\mathcal{E}_t^{(i)}$ are the sets of edges in $G_s^{(i)}$ and $G_t^{(i)}$, respectively. The overall procedure of our edge augmentation is summarized in Algorithm \ref{CEAalgorithm}.\footnote{The source code for \text{Grad-Align-EA} is available online (https://github.com/jindeok/Grad-Align-full/tree/main/Grad-Align-EA).} Let $\tilde{\mathcal{E}}_s$ ($\tilde{\mathcal{E}}_t$) denote the set of {\em potentially confident edges} that are not in the set $\mathcal{E}_s^{(i)}$ ($\mathcal{E}_t^{(i)}$) while their counterpart edges upon the node mapping $\pi^{(i)}$ are in $\mathcal{E}_t^{(i)}$ ($\mathcal{E}_s^{(i)}$). If the aligned node pair does not match the ground truth, then the augmented edge will be treated as noise in the next alignment step. To remedy this problem, our edge augmentation module creates edges among potentially confident ones only when they are strongly confident. More specifically, we augment edges only when the dual-perception similarity ${\bf S}^{(i)}(u,v)$ of node pair $(u,v)$ is greater than a certain threshold $\tau>0$ (refer to lines 3 and 9 in Algorithm \ref{CEAalgorithm}), where $\tau$ will be set appropriately in Section \ref{sec5.5.5}. Fig. \ref{cea fig} illustrate an example of edge augmentation over iterations in our \textsf{Grad-Align-EA}. In the first step, the potentially confident edge between nodes $(a,b)$ in $G_t$ is augmented. In the second step, the edge between nodes $(A,C)$ is augmented but another potentially confident edge between nodes $(B,D)$ is not augmented because ${\bf S}^{(i)}(D,d)$ is not sufficiently high.

Since the structure of each network evolves over iterations via the edge augmentation module, we need to newly update the hidden representations of each network by retraining model parameters $\theta$ of GNN and then recalculate the multi-layer embedding similarity matrix. In this context, the dual-perception similarity matrix in (\ref{final sim}) is rewritten as
    \begin{equation}
    \label{CEA_dual sim}
    \mathbf{S}^{(i)}=\mathbf{S}^{(i)}_{emb}\odot \mathbf{S}^{(i)}_{Tve},
    \end{equation}
where $\mathbf{S}^{(i)}_{emb}$ represents the multi-layer embedding similarity matrix at the $i$-th iteration when \textsf{Grad-Align-EA} is used. 

\begin{algorithm}[t]
\caption{Edge augmentation}
\label{CEAalgorithm}
 \begin{algorithmic}[1]
  \renewcommand{\algorithmicrequire}{\textbf{Input:}}
  \renewcommand{\algorithmicensure}{\textbf{Output:}}
  \REQUIRE $G^{(i)}_s = (\mathcal{V}_s,\mathcal{E}^{(i)}_s,\mathcal{X}_s)$, $G_t^{(i)} = (\mathcal{V}_t, \mathcal{E}^{(i)}_t,\mathcal{X}_t)$, $\pi^{(i)}$, $\mathbf{S}^{(i)}$, $\tilde{\mathcal{E}}_s$, $\tilde{\mathcal{E}}_t$, $\tau$
  \ENSURE $G^{(i+1)}_s, G^{(i+1)}_t$
  \STATE /* Edge augmentation for $G^{(i)}_s$ */  
  \FOR{$(u_1, u_2) \in \tilde{\mathcal{E}}_s$}
  \IF{$\mathbf{S}^{(i)}(u,\pi^{(i)}{(u)}) > \tau$ and $\mathbf{S}^{(i)}(v,\pi^{(i)}{(v)}) > \tau$}
  \STATE $\mathcal{E}^{(i+1)}_s = \mathcal{E}^{(i)}_s \cup \{(u, v)\}$
  \ENDIF
  \ENDFOR
  \STATE /* Edge augmentation for $G^{(i)}_t$ */ 
  \FOR{$(v_1, v_2) \in \tilde{\mathcal{E}}_t$}
  \IF{$\mathbf{S}^{(i)}(\pi^{-1(i)}{(u'),u'}) > \tau$ and $\mathbf{S}^{(i)}(\pi^{-1(i)}{(v'),v'}) > \tau$}
  \STATE $\mathcal{E}^{(i+1)}_t = \mathcal{E}^{(i)}_t \cup \{ (u',v')\}$
  \ENDIF
  \ENDFOR
  
  \RETURN $G^{(i+1)}_s=(\mathcal{V}_s,\mathcal{E}^{(i+1)}_s,\mathcal{X}_s)$,\\ $G^{(i+1)}_t = (\mathcal{V}_t, \mathcal{E}^{(i+1)}_t,\mathcal{X}_t)$
 \end{algorithmic}
\end{algorithm}


\section{Experimental Evaluation}\label{section 5}
In this section, we first describe datasets used in the evaluation. We also present five state-of-the-art NA methods for comparison. After presenting performance metrics and our experimental settings, we comprehensively evaluate the performance of our \textsf{Grad-Align} method and five benchmark methods.

\subsection{Datasets}

\begin{table}[t!]
\scriptsize
\begin{tabular}{cccccc}
\hline
\multicolumn{2}{c}{Datasets}                                                 & {\begin{tabular}[c]{@{}c@{}} \# of \\ nodes\end{tabular}} & {\begin{tabular}[c]{@{}c@{}} \# of \\ edges\end{tabular}} & {\begin{tabular}[c]{@{}c@{}} \# of \\ attributes\end{tabular}} & {\begin{tabular}[c]{@{}c@{}} \# of \\ ground truth\\node pairs\end{tabular}}          \\ \hline
\multirow{2}{*}{\begin{tabular}[c]{@{}c@{}}Facebook \\Twitter\end{tabular}}    & $G_s$  & 1,043 & 4,734 & 0  & \multirow{2}{*}{1,043} \\
                                                                        & $G_t$ & 1,043 & 4,860  & 0 &                       \\ \hline
\multirow{2}{*}{\begin{tabular}[c]{@{}c@{}}Douban Online \\ Douban Offline\end{tabular}} & $G_s$  & 3,906 & 8,164 & 538  & \multirow{2}{*}{1,118} \\
                                                                        & $G_t$ & 1,118 & 1,511 & 538  &                       \\ \hline
\multirow{2}{*}{\begin{tabular}[c]{@{}c@{}}Allmovie \\ IMDb\end{tabular}}   & $G_s$  & 6,011  & 124,709 & 14  & \multirow{2}{*}{5,176} \\
                                                                        & $G_t$ & 5,713 & 119,073 & 14  &                       \\ \hline
\multirow{2}{*}{\begin{tabular}[c]{@{}c@{}}Facebook\\(Its noisy version)\end{tabular}}    & $G_s$  & 1,256 & 4,260 & 0  & \multirow{2}{*}{1,043} \\
                                                                        & $G_t$ & 1,256 & 4,256  & 0 &                       \\ \hline
\multirow{2}{*}{\begin{tabular}[c]{@{}c@{}}Econ\\(Its noisy version)\end{tabular}} & $G_s$  & 1,258 & 6,857 & 20  & \multirow{2}{*}{1,258} \\
                                                                        & $G_t$ & 1,258 & 6,860 & 20  &                       \\ \hline
\multirow{2}{*}{\begin{tabular}[c]{@{}c@{}}DBLP\\(Its noisy version)\end{tabular}}                                                   & $G_s$  & 2,151  & 5,676 & 20  & \multirow{2}{*}{2,158} \\
                                                                        & $G_t$ & 2,151 & 5,672 & 20  &                       \\ \hline
\multirow{2}{*}{\begin{tabular}[c]{@{}c@{}}Foursquare\\(Its noisy version)\end{tabular}}                                                   & $G_s$  & 17,355  & 132,208 & 0  & \multirow{2}{*}{17,355} \\
                                                                        & $G_t$ & 17,355 & 131,018 & 0  &                       \\ \hline
\end{tabular}
\caption{Statistics of the seven datasets used in our experiments.}
\label{datasettable}
\end{table}

We conduct experiments on several real-world and synthetic datasets across various domains, which are widely adopted for evaluating the performance of NA. The main statistics of each dataset, including the number of nodes, the number of edges, the number of attributes, and the number of node pairs, are summarized in Table \ref{datasettable}. In the following, we explain important characteristics of the datasets briefly.
\subsubsection{Real-World Datasets}
\label{sec 5.1.1}
We use three real-world datasets, each of which consists of two networks (i.e., source and target networks).

\textbf{Facebook vs. Twitter} {\bf (Fb-Tw)}. The Fb-Tw dataset is composed of two real-world social networks collected and published by \cite{cao2016bass}. User accounts and friendships of accounts are treated as nodes and edges, respectively. 

\textbf{Douban Online vs. Douban Offline} {\bf (Douban)}. The Douban dataset is a Chinese social network collected and published by \cite{zhong2012comsoc}. User accounts and their friendships are treated as nodes and edges, respectively. 

\textbf{Allmovie vs. IMDb} {\bf (Am-ID)}. The Allmovie network is constructed from Rotten Tomatoes (an American review-aggregation website), where films are treated as nodes and two films have an edge connecting them if they have at least one common actor.\footnote{https://www.kaggle.com/ayushkalla1/rotten-tomatoes-movie-database.} The IMDb network is constructed from IMDb (an online database of information on movies, TV, and celebrities).\footnote{https://www.kaggle.com/jyoti1706/IMDBmoviesdataset.} Its nodes and edges are created similarly as in Allmovie. 

\subsubsection{Synthetic Datasets}
\label{sec 5.1.2}
In addition to the above three real-world datasets, we synthesize network data to comprehensively evaluate noisy conditions on the network structure and node attributes similarly as in \cite{trung2020adaptive,du2019joint}. From the original network, we generate its noisy version by randomly removing a certain number of edges and replacing a portion of node attributes with zeros, while preserving the number of ground truth cross-network node pairs. We use four synthetic datasets, consisting of two non-attributed networks and two attributed networks, as follows.

\textbf{Facebook}. The Facebook network is originated from the source network of the Fb-Tw dataset in Section \ref{sec 5.1.1}.

\textbf{Econ}. The Econ network is an economic model of Victoria state, Australia during the banking crisis in 1880 \cite{rossi2015network, trung2020adaptive}. The nodes and edges represent the organizations located in the state and the contractual relationships between them, respectively.

\textbf{DBLP}. The DBLP dataset is a co-authorship network collected and published by \cite{prado2012mining}. Authors and their academic
interactions are treated as nodes and edges, respectively. An attribute vector represents the number of publications in computer science conferences \cite{du2019joint}.

\textbf{Foursquare}. The Foursquare dataset is a location-based online social network and is originally collected by \cite{zhang2015integrated}. The nodes and edges represent the users and the follower/followee relationships between them, respectively.

\subsection{State-of-the-Art Methods}
In this subsection, we present five state-of-the-art NA methods for comparison.

\textbf{GAlign}~\cite{trung2020adaptive}. This is an unsupervised NA method based on a multi-order GCN model using local and global structural information, where prior seed nodes are not available.

\textbf{CENALP}~\cite{du2019joint}. This method jointly performs NA and link prediction tasks to improve the alignment accuracy. A cross-network embedding strategy is employed based on a variant of DeepWalk \cite{perozzi2014deepwalk}.

\textbf{FINAL}~\cite{zhang2016final}. This method aligns attributed networks based on the alignment consistency principle. Specifically, FINAL utilizes three consistency conditions including the topology consistency, node attribute consistency, and edge attribute consistency.

\textbf{PALE}~\cite{man2016predict}. This is a supervised NA model that employs network embedding with awareness of prior seed nodes and learns a cross-network mapping via an MLP architecture for NA.

\textbf{DeepLink}~\cite{zhou2018deeplink}. This model encodes nodes into vector representations to capture local and global network structures through deep neural networks in a semi-supervised learning manner.

\subsection{Performance Metrics}
To assess the performance of our proposed \textsf{Grad-Align} method and the five state-of-the-art methods, as the most popular metric, we adopt the {\em alignment accuracy} \cite{zhang2016final, du2019joint}, denoted as {\em Acc}, which quantifies the proportion of correct node correspondences out of the total $M$ correspondences. We also adopt {\em Precision@q} (also known as $Success@q$) as another performance metric as in \cite{zhou2018deeplink, trung2020adaptive, zhang2016final}, which indicates whether there is the true positive matching identity in top-$q$ candidates and is expressed as
\begin{equation}
    Precision@q = \frac{\sum_{v_s^*\in\mathcal{V}_s} \mathds{1}_{v_t^* \in \mathcal{S}^q(v_s^*)}}{M},
\end{equation}
where $(v_s^*, v_t^*)$ is each node pair in the ground truth; $\mathcal{S}^q(v_s^*)$ indicates the set of indices of top-$q$ elements in the $v_s^*$-th row of the dual-perception similarity matrix ${\bf S}^{(\ceil*{\frac{M}{N}}+1)}$ in (\ref{final sim}); and $\mathds{1}_{\mathcal{S}_{v_s^*}^q(v_t^*)}$ is the indicator function. For node $v_s^*$, if the similarity ${\bf S}^{(\ceil*{\frac{M}{N}}+1)}(v_s^*, v_t^*)$ is ranked within the $q$-th highest values in the row ${\bf S}^{(\ceil*{\frac{M}{N}}+1)}(v_s^*,:)$ of the similarity matrix ${\bf S}^{(\ceil*{\frac{M}{N}}+1)}$, then the alignment output for $v_s^*$ is recorded as a successful case. Note that the higher the value of each of the two metrics, the better the performance.

\subsection{Experimental Setup}
\label{sec 5.4}
 We describe experimental settings of neural networks (i.e., the GNN model) in our \textsf{Grad-Align} method. The GNN model is implemented by PyTorch Geometric \cite{fey2019pyg}, which is a geometric deep learning extension library in PyTorch. We set the dimension of each GNN hidden layer as $150$. We train our GNN model using Adam optimizer \cite{kingma2015adam} with a learning rate of 0.005. Since Fb-Tw, Facebook (and its noisy version), and Foursquare (and its noisy version) datasets do not contain node attributes, we use all-ones vectors $\mathbf{1} \in \mathbb{R}^{1 \times n_s}$ and $\mathbf{1} \in \mathbb{R}^{1 \times n_t}$ as the input node attribute vectors for the GNN model on the datasets. We use the following key hyperparameters in \textsf{Grad-Align}.
\begin{itemize}
\item The number of GNN layers ($k$);
\item Coefficients of the Tversky index in (\ref{Tversky sim}) ($\alpha$ and $\beta$);
\item The number of iterations for gradual matching ($iter=\ceil{\frac{M}{N}} +1$);
\item The proportion of prior seed node pairs out of $M$ ground truth node pairs ($t$).
\end{itemize}
In our experiments, the above hyperparameters are set to the pivot values $k=2$, $\alpha=\frac{n_s}{n_t}$, $\beta=1$, $iter=15$, and $t=0.1$ unless otherwise stated.
 
 Next, as a default experimental setting for all the methods, 10\% of ground truth node correspondences are randomly selected and used as the training set over all the datasets as in \cite{trung2020adaptive}. For each synthetic dataset in Section \ref{sec 5.1.2}, to generate its noisy version, we randomly remove 10\% of edges and replace 10\% of node attributes with zeros unless otherwise specified. We conduct each experiment over 10 different random seeds to evaluate the average performance. All experiments are carried out with Intel (R) 12-Core (TM) i7-9700K CPUs @ 3.60 GHz and 32GB RAM.

 \subsection{Experimental Results}
 
 In this subsection, our extensive empirical study is designed to answer the following six key research questions.
\begin{itemize}
    \item \textit{Q1.} How do underlying GNN models affect the performance of the \textsf{Grad-Align} method?
    \item \textit{Q2.} How do model hyperparameters affect the performance of the \textsf{Grad-Align} method?
    \item \textit{Q3.} How much does the \textsf{Grad-Align} method improve the NA performance over state-of-the-art NA methods?
    \item \textit{Q4.} How robust is our \textsf{Grad-Align} method to more difficult settings with high structural/attribute noise levels?
    \item \textit{Q5.} How much does each component in the \textsf{Grad-Align} method contributes to the performance?
    \item \textit{Q6.} How expensive is the computational complexity of \textsf{Grad-Align} in comparison with state-of-the-art NA methods?
\end{itemize}
To answer the research questions stated above, we comprehensively carry out experiments in the following.

\subsubsection{Comparative Study Among GNN Models (Q1)}
\begin{figure}[t!]
    \centering
    \pgfplotsset{compat=1.11,
    /pgfplots/ybar legend/.style={
    /pgfplots/legend image code/.code={%
       \draw[##1,/tikz/.cd,yshift=-0.25em]
        (0cm,0cm) rectangle (3pt,0.8em);},
   },
}
    \begin{tikzpicture}
    \begin{axis}[
        width  = 0.65*\columnwidth,
        height = 3.5cm,
        major x tick style = transparent,
        ybar=0,
        bar width=0.03*\columnwidth,
        ymajorgrids = true,
        ylabel = {{\em Acc}},
        symbolic x coords={Fb-Tw, Douban, Am-ID},
        xtick = data,
        scaled y ticks = false,
        enlarge x limits=0.4,
        ymin=0.4,
        legend cell align=left,
        legend style={at={(0.5,1.1)}, anchor=south,legend columns=3,font=\footnotesize}
    ]
        \addplot[style={black,fill=teal,mark=none}, error bars/.cd,
y dir=both,y explicit]
            coordinates {(Fb-Tw, 0.9674) +-(0.0127,0.0187) (Douban,0.5760)+-(0.0207,0.0187) (Am-ID,0.9508)+-(0.0227,0.0185)};

        \addplot[style={black,fill=lightgray,mark=none}, error bars/.cd,
y dir=both,y explicit]
            coordinates {(Fb-Tw, 0.9594) +-(0.0127,0.0187) (Douban,0.5700)+-(0.0207,0.0187) (Am-ID,0.9458)+-(0.0227,0.0185)};

        \addplot[style={black,fill=brown,mark=none}, error bars/.cd,
y dir=both,y explicit]
            coordinates {(Fb-Tw, 0.9674) +-(0.0127,0.0187) (Douban,0.5921)+-(0.0207,0.0187) (Am-ID,0.9618)+-(0.0227,0.0185)};

        \legend{\textsf{Grad-Align-GCN},\textsf{Grad-Align-SAGE},\textsf{Grad-Align-GIN}}
    \end{axis}
\end{tikzpicture}
    \caption{Alignment accuracy according to different GNN models in our \textsf{Grad-Align} method on the three real-world datasets.}
    \label{Q1plot_gnn}
\end{figure}

\begin{figure}[t!]
    \centering
    \pgfplotsset{compat=1.11,
    /pgfplots/ybar legend/.style={
    /pgfplots/legend image code/.code={%
       \draw[##1,/tikz/.cd,yshift=-0.25em]
        (0cm,0cm) rectangle (3pt,0.8em);},
   },
}
\begin{tikzpicture}
    \begin{axis}[
        width  = 0.65*\columnwidth,
        height = 3.5cm,
        major x tick style = transparent,
        ybar=0,
        bar width=0.03*\columnwidth,
        ymajorgrids = true,
        ylabel = {{\em Acc}},
        symbolic x coords={Fb-Tw, Douban, Am-ID},
        xtick = data,
        scaled y ticks = false,
        enlarge x limits=0.4,
        ymin=0.4,
        legend cell align=left,
        legend style={at={(0.5,1.1)}, anchor=south,legend columns=3,font=\footnotesize}
    ]
        \addplot[style={black,fill=teal,mark=none}, error bars/.cd,
y dir=both,y explicit]
            coordinates {(Fb-Tw, 0.9674) +-(0.0127,0.0187) (Douban,0.5921)+-(0.0207,0.0187) (Am-ID,0.9618)+-(0.0227,0.0185)};

        \addplot[style={black,fill=lightgray,mark=none}, error bars/.cd,
y dir=both,y explicit]
            coordinates {(Fb-Tw, 0.9501) +-(0.0127,0.0187) (Douban,0.5823)+-(0.0207,0.0187) (Am-ID,0.9518)+-(0.0227,0.0185)};

        \addplot[style={black,fill=brown,mark=none}, error bars/.cd,
y dir=both,y explicit]
            coordinates {(Fb-Tw, 0.9559) +-(0.0127,0.0187) (Douban,0.5801)+-(0.0207,0.0187) (Am-ID,0.9501)+-(0.0227,0.0185)};

        \legend{\textsf{Sum},\textsf{Mean},\textsf{Max}}

    \end{axis}
\end{tikzpicture}
    \caption{Alignment accuracy according to different aggregators in our \textsf{Grad-Align} method on the three real-world datasets.}--
    \label{Q1plot_agg}
\end{figure}
In Fig. \ref{Q1plot_gnn}, we show the performance on {\em Acc} of various GNN models in \textsf{Grad-Align} using three real-world datasets. Since our method is {\em GNN-model-agnostic}, any existing models can be adopted; however, in our experiments, we adopt the following three milestone GNN models from the literature, namely GCN \cite{DBLP:conf/iclr/KipfW17} (\textsf{Grad-Align-GCN}), GraphSAGE \cite{DBLP:conf/nips/HamiltonYL17} (\textsf{Grad-Align-SAGE}), and GIN \cite{xu2018powerful} (\textsf{Grad-Align-GIN}).\footnote{For the GIN model, we use a two-layer MLP architecture and the ReLU activation function~\cite{DBLP:conf/icml/NairH10} according to the original implementation in~\cite{xu2018powerful}.} Here, we use the sum aggregator as a default AGGREGATE function in (\ref{aggregateequation}).

From Fig. \ref{Q1plot_gnn}, it is seen that \textsf{Grad-Align-GIN} consistently outperforms other models regardless of the datasets although the gains over other models are not significant. It is worth noting that GIN was proposed to generalize the Weisfeiler-Lehman (WL) graph isomorphism test to achieve its maximum discriminative power among GNNs \cite{xu2018powerful}. Due to the fact that the higher the expressiveness of the node representation, the more probable it is to find the correct node correspondence without any ambiguity, the GIN model-aided approach (i.e., \textsf{Grad-Align-GIN}) achieves the best performance with a more expressive power.

In Fig. \ref{Q1plot_agg}, we show how {\em Acc} performance behaves according to several AGGREGATE functions in (\ref{aggregateequation}), including the sum, mean, and max aggregators, when \textsf{Grad-Align-GIN} is used for the three real-world datasets. It is observed that the sum aggregator has the most expressive power among three AGGREGATE functions while achieving the highest {\em Acc}. This finding is consistent with the statements in \cite{xu2018powerful}.

From the intriguing observations stated above, we use \textsf{Grad-Align-GIN} with sum aggregation in our subsequent experiments.

\subsubsection{Effect of Hyperparameters (Q2)}
\label{sec5.5.2}

In Fig. \ref{Q2plot}, we investigate the impact of four key hyperparameters, including $k$, $\alpha$, $iter$, and $t$ addressed in Section \ref{sec 5.4}, on the performance of \textsf{Grad-Align} in terms of the {\em Acc} score using three real-world datasets. When a hyperparameter varies so that its effect is clearly revealed, other parameters are set to the pivot values in Section \ref{sec 5.4}. Our findings are as follows.
\begin{figure}[t!]
\pgfplotsset{footnotesize,samples=10}
\centering
\begin{tikzpicture}
\begin{axis}[
legend columns=-1,
legend entries={Fb-Tw, Douban, Am-ID},
legend to name=named,
xmax=4,xmin=1,ymin= 0.5,ymax=1,
xlabel=(a) Effect of $k$,ylabel={\em Acc}, width = 4cm, height = 3.6cm,
xtick={1,2,3,4},ytick={0.5,0.6,0.7,...,1}]
    \addplot+[color=black] coordinates{(1,0.9602) (2,0.9678) (3,0.9553)(4,0.9542)};
    \addplot+[color=orange] coordinates{(1,0.5428) (2,0.5925) (3,0.5687)(4,0.5524)};
    \addplot+[color=purple] coordinates{(1,0.9511) (2,0.9577) (3,0.9497)(4,0.9552)};
\end{axis}
\end{tikzpicture}
\begin{tikzpicture}
\begin{axis}[
xmax=1,xmin=0.1,ymin= 0.5,ymax=1,
xlabel=(b) Effect of $\alpha$,ylabel={\em Acc}, width = 4cm, height = 3.6cm,
xtick={0.1,0.3,0.5,0.7,0.9,1},ytick={0.5,0.6,0.7,...,1}]
    \addplot+[color=black] coordinates{(0.1,0.9559) (0.3,0.9505) (0.5,0.9602) (0.7,0.9608) (0.9,0.9651) (1,0.9678)};
    \addplot+[color=orange] coordinates{(0.1,0.5312) (0.3,0.5921) (0.5,0.5631) (0.7,0.5679) (0.9,0.5633) (1,0.5323)};
    \addplot+[color=purple] coordinates{(0.1,0.9545) (0.3,0.9512) (0.5,0.9532) (0.7,0.9502) (0.9,0.9577) (1,0.9581)};
\end{axis}
\end{tikzpicture}
\begin{tikzpicture}
\begin{axis}[
xmax=20,xmin=1,ymin= 0.2,ymax=1,
xlabel=(c) Effect of $iter$,ylabel={\em Acc}, width = 4cm, height = 3.6cm,
xtick={1,5,10,15,20},ytick={0,0.2,0.4,...,1}]
    \addplot+[color=black]  coordinates{(1,0.4669)(5,0.9128)(10,0.9540)(15,0.9674)(20,0.9697)};
    \addplot+[color=orange] coordinates{(1,0.3241)(5,0.5286)(10,0.5581)(15,0.5921)(20,0.5931)};
    \addplot+[color=purple] coordinates{(1,0.8942)(5,0.9514)(10,0.9558)(15,0.9568)(20,0.9574)};
\end{axis}
\end{tikzpicture}
\begin{tikzpicture}
\begin{axis}[
xmax=0.5,xmin=0,ymin= 0,ymax=1,
xlabel=(d) Effect of $t$,ylabel={\em Acc}, width = 4cm, height = 3.6cm,
xtick={0,0.1,0.2,0.3,0.4,0.5},ytick={0,0.2,0.4,...,1}]
    \addplot+[color=black] coordinates{(0,0.05) (0.1,0.9678) (0.2,0.9878) (0.3,0.9899) (0.4,0.9925) (0.5, 0.9935)};
    \addplot+[color=orange] coordinates{(0,0.3408) (0.1,0.5921) (0.2,0.6879) (0.3,0.7487) (0.4,0.7907) (0.5, 0.8504)};
    \addplot+[color=purple] coordinates{(0,0.8758) (0.1,0.9577) (0.2,0.9711) (0.3,0.9840) (0.4,0.9942) (0.5, 0.9978)};
\end{axis}
\end{tikzpicture}
\\
\ref{named}
\caption{Alignment accuracy according to different values of hyperparameters in our \textsf{Grad-Align} method on the three real-world datasets.}
\label{Q2plot}
\end{figure}
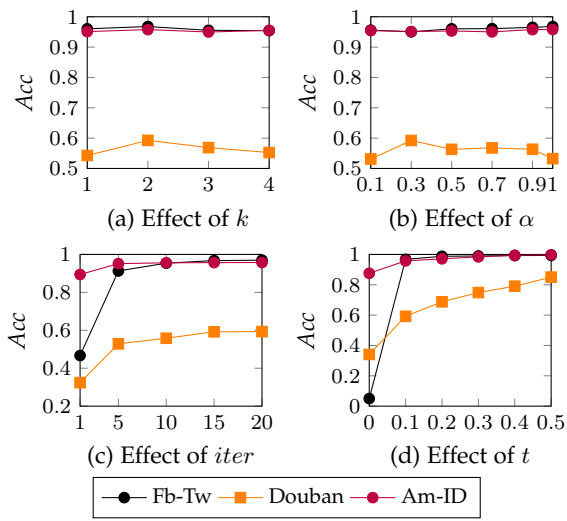
%
\begin{table*}[h!]
\centering
\setlength\tabcolsep{8pt} 
\scalebox{1}{
\begin{tabular}{cccccccc}
\toprule 
Dataset & Metric &\textsf{Grad-Align} & GAlign & CENALP & FINAL & PALE & DeepLink\\
\midrule
\multirow{4}{*}{\rotatebox{0}{Fb-Tw}} 
  & {\em Acc} & \underline{0.9674}  & 0.0513 & 0.8566 & 0.8571 & 0.7963  & 0.2520\\
  & $Precision@1$ & \underline{0.9674} & 0.0306 & 0.5781 & 0.6007 & 0.5295  & 0.1151\\
  & $Precision@5$ & \underline{0.9856}  & 0.0422 & 0.7220 & 0.7864 & 0.7854  & 0.2927\\
  & $Precision@10$ & \underline{0.9904} & 0.0612& 0.7661 & 0.7773 & 0.8803  & 0.3350\\
\midrule
\multirow{4}{*}{\rotatebox{0}{Douban}} 
  & {\em Acc} & \underline{0.5921} & 0.2568 & 0.1350 & 0.3373 & 0.1852  & 0.1011\\
  & $Precision@1$ & \underline{0.6682}  & 0.3686 & 0.1571 & 0.3479 & 0.1377  & 0.0921\\
  & $Precision@5$ & \underline{0.8551}  & 0.5233 & 0.2584 & 0.5134 & 0.3283  & 0.2290\\
  & $Precision@10$ & \underline{0.8980} & 0.6324 & 0.3618 & 0.6764 & 0.4338  & 0.2907\\
\midrule
\multirow{4}{*}{\rotatebox{0}{Am-ID}} 
  & {\em Acc} & \underline{0.9568}  & 0.7364 & 0.5212 & 0.7725 & 0.7397  & 0.2780\\
  & $Precision@1$ & \underline{0.9601} & 0.7251 & 0.4566& 0.6524 & 0.5793  & 0.1361\\
  & $Precision@5$ & \underline{0.9744} & 0.8101 & 0.6657 & 0.8594 & 0.8242  & 0.3327\\
  & $Precision@10$ & \underline{0.9783}  & 0.8749 & 0.8127 & 0.8945 & 0.8519  & 0.4650\\
\midrule
\multirow{4}{*}{\rotatebox{0}{Facebook \& its noisy version}} 

  & {\em Acc} & \underline{0.9396} & 0.0413 & 0.8174& 0.6072 & 0.6079  & 0.2780\\
  & $Precision@1$ & \underline{0.9415}  & 0.0406 & 0.8174 & 0.5004 & 0.5279  & 0.1361\\
  & $Precision@5$ & \underline{0.9962}  & 0.0622 & 0.8523 & 0.5509 & 0.6040  & 0.3327\\
  & $Precision@10$ & \underline{0.9987}  & 0.0899 & 0.9175 & 0.8954 & 0.7162  & 0.4650\\
\midrule
\multirow{4}{*}{\rotatebox{0}{Econ \& its noisy version}} 
  & {\em Acc} & \underline{0.9841}  & 0.8953 & 0.5509 & 0.4948 & 0.5819  & 0.2583\\
  & $Precision@1$ & \underline{0.9873}  & 0.8647 & 0.4247 & 0.4434 & 0.3060  & 0.1161\\
  & $Precision@5$ & \underline{0.9960}  & 0.9100 & 0.5658 & 0.5642 & 0.6041  & 0.2671\\
  & $Precision@10$ & \underline{0.9968}  & 0.9346 & 0.7112 & 0.5944 & 0.6836  & 0.3506\\
\midrule
\multirow{4}{*}{\rotatebox{0}{DBLP \& its noisy version}} 
  & {\em Acc} & \underline{0.9350}  & 0.9126 & 0.6458 & 0.7646 & 0.3436  & 0.1413\\
  & $Precision@1$ & \underline{0.9292} & 0.8914 & 0.5992 & 0.7785 & 0.1437  & 0.0656\\
  & $Precision@5$ & \underline{0.9778} & 0.9340 & 0.6528 & 0.7954 & 0.3622  & 0.1957\\
  & $Precision@10$ & \underline{0.9990} & 0.9623 & 0.7775 & 0.8164 & 0.4826  & 0.2748\\
\midrule
\multirow{4}{*}{\rotatebox{0}{Foursquare \& its noisy version}} 
  & {\em Acc} & \underline{0.8998}  & 0.0241 & 0.8128 & 0.5214 & 0.3872  & 0.1374\\
  & $Precision@1$ & \underline{0.8992} & 0.0214 & 0.8112 & 0.5074 & 0.3778  & 0.1352\\
  & $Precision@5$ & \underline{0.9128} & 0.0287 & 0.8424 & 0.5876 & 0.3958  & 0.1627\\
  & $Precision@10$ & \underline{0.9344} & 0.0397 & 0.8875 & 0.6124 & 0.4124  & 0.1871\\
\bottomrule
\end{tabular}}
\caption{Performance comparison among \textsf{Grad-Align} and state-of-the-art NA methods in terms of the {\em Acc} and {\em Precision@q}. Here, the best method for each case is highlighted using underlines.}
\label{Q3table}
\end{table*}
\begin{itemize}
    \item {\bf The effect of $k$:} From Fig. \ref{Q2plot}a, setting $k=2$ consistently leads to the best performance for all the datasets. If $k<2$, then the proximity information is limited to the direct neighbors, which can be vulnerable to structural and attribute inconsistencies of nodes. If $k>2$, then our \textsf{Grad-Align} method may experience the oversmoothing problem for GNNs \cite{chen2020measuring}.

    \item {\bf The effect of $\alpha$:} From Theorem \ref{theorem_tversky}, we recall that using the Tversky similarity is beneficial in conducting the NA task compared to the case of the Jaccard similarity (i.e., $\alpha=\beta=1$). From Fig. \ref{Q2plot}b, we show that setting $\alpha$ near $\frac{n_t}{n_s}$ achieves the best performance for all the datasets, where $\frac{n_t}{n_s}$ corresponds to 1, 0.29, and 0.95 for the Fb-Tw, Douban, and Am-ID datasets, respectively. This empirical finding is vital in the design perspective. Specifically, by setting $\alpha\simeq \frac{n_t}{n_s}$ according to the ratio of the number of nodes in two unbalanced networks, we are capable of balancing between two terms in (\ref{Tversky sim}), $|X_u^{(i)}-Y_v^{(i)}|$ and $|Y_v^{(i)}-X_u^{(i)}|$, which thus enhances the degree of topological consistency across the two networks and then improves the alignment accuracy. This demonstrates the effectiveness of our Tversky similarity with proper parameter settings.
    
    \item {\bf The effect of $iter$:} In Fig. \ref{Q2plot}c, the performance is improved with increasing {\em iter} regardless of the datasets, which verifies the power of gradual alignment. It is also worthwhile to note that using only a small number of iterations for some datasets (e.g., $iter=5$ for Am-ID) is sufficient to achieve a significant gain over the case of finding all node correspondences at once (i.e., $iter=1$).

    \item {\bf The effect of $t$:}  As shown in Fig. \ref{Q2plot}d, one can see that dramatic gains are possible by exploiting prior seed nodes. For the non-attributed network such as Fb-Tw, the case of $t=0$ performs quite poorly. This is because it is difficult to precisely compute the multi-layer embedding similarity ${\bf S}_{emb}$ when node attributes are unavailable; thus, the Tversky similarity calculation with prior seed nodes plays a crucial role in correctly finding node correspondences for such non-attributed networks. On the other hand, for attributed networks such as Douban and Am-ID, satisfactory performance is observed even when $t=0$. 
\end{itemize}   

\begin{figure*}[t!]
\pgfplotsset{footnotesize,samples=10}
\centering
\begin{tikzpicture}
\begin{axis}[
legend columns=6,
legend entries={\textsf{Grad-Align},GAlign,CENALP,FINAL,PALE,DeepLink},
legend to name=named,
xlabel= (a) Facebook,ylabel={\em Acc},  width = 4cm, height = 3.7cm,
xmin=10,xmax=50,ymin= 0,ymax=1,
xtick={10,20,...,50},ytick={0,0.2,0.4,...,1}]
    \addplot+[color=black] coordinates{(10,0.9616)(20,0.9148)(30,0.8514)(40,0.8255)(50,0.7314)};
    \addplot+[color=orange]     coordinates{(10,0.0213)(20,0.0211)(30,0.0208)(40,0.0189)(50,0.0185)};
    \addplot+[color=purple] coordinates{(10,0.8170)(20,0.7432)(30,0.6752)(40,0.6215)(50,0.5014)};
    \addplot coordinates{(10,0.6072)(20,0.5721)(30,0.5125)(40,0.4521)(50,0.3711)};
    \addplot coordinates{(10,0.6079)(20,0.5547)(30,0.5031)(40,0.4210)(50,0.3210)};
    \addplot coordinates{(10,0.2780)(20,0.2200)(30,0.2014)(40,0.1852)(50,0.1324)};
\end{axis}
\end{tikzpicture}
\begin{tikzpicture}
\begin{axis}[
xmax=20,xmin=1,ymin= 0,ymax=1,
xlabel=(b) Econ,ylabel={\em Acc},  width = 4cm, height = 3.7cm,
xmin=10,xmax=50,ymin= 0,ymax=1,
xtick={10,20,...,50},ytick={0,0.2,0.4,...,1}]
    \addplot+[color=black] coordinates{(10,0.9936)(20,0.9744)(30,0.9682)(40,0.9329)(50,0.9079)};
    \addplot+[color=orange]     coordinates{(10,0.8953)(20,0.8523)(30,0.8127)(40,0.7522)(50,0.7125)};
    \addplot+[color=purple] coordinates{(10,0.5509)(20,0.5241)(30,0.5012)(40,0.4221)(50,0.3721)};
    \addplot coordinates{(10,0.6072)(20,0.5621)(30,0.4927)(40,0.4212)(50,0.3751)};
    \addplot coordinates{(10,0.6079)(20,0.5234)(30,0.4735)(40,0.4235)(50,0.3752)};
    \addplot coordinates{(10,0.2978)(20,0.2333)(30,0.1625)(40,0.1221)(50,0.0722)};
\end{axis}
\end{tikzpicture}
\begin{tikzpicture}
\begin{axis}[
xmax=20,xmin=1,ymin= 0,ymax=1,
xlabel=(c) DBLP,ylabel={\em Acc},  width = 4cm, height = 3.7cm,
xmin=10,xmax=50,ymin= 0,ymax=1,
xtick={10,20,...,50},ytick={0,0.2,0.4,...,1}]
    \addplot+[color=black] coordinates{(10,0.9358)(20,0.9126)(30,0.8810)(40,0.8475)(50,0.7996)};
    \addplot+[color=orange]     coordinates{(10,0.9126)(20,0.8897)(30,0.8452)(40,0.7842)(50,0.6948)};
    \addplot+[color=purple] coordinates{(10,0.6458)(20,0.6121)(30,0.5844)(40,0.4655)(50,0.3983)};
    \addplot coordinates{(10,0.7646)(20,0.7316)(30,0.6811)(40,0.6234)(50,0.5977)};
    \addplot coordinates{(10,0.3436)(20,0.2985)(30,0.2230)(40,0.1985)(50,0.1652)};
    \addplot coordinates{(10,0.1713)(20,0.1324)(30,0.0724)(40,0.0685)(50,0.0374)};
\end{axis}
\end{tikzpicture}
\begin{tikzpicture}
\begin{axis}[
xmax=20,xmin=1,ymin= 0,ymax=1,
xlabel=(d) Foursquare,ylabel={\em Acc},  width = 4cm, height = 3.7cm,
xmin=10,xmax=50,ymin= 0,ymax=1,
xtick={10,20,...,50},ytick={0,0.2,0.4,...,1}]
    \addplot+[color=black] coordinates{(10,0.8998)(20,0.8726)(30,0.8010)(40,0.7575)(50,0.6496)};
    \addplot+[color=orange] coordinates{(10,0.021)(20,0.017)(30,0.015)(40,0.01)(50,0.0021)};
    \addplot+[color=purple] coordinates{(10,0.8128)(20,0.7528)(30,0.6534)(40,0.5847)(50,0.5272)};
    \addplot coordinates{(10,0.5214)(20,0.4916)(30,0.4311)(40,0.3734)(50,0.2577)};
    \addplot coordinates{(10,0.3872)(20,0.2585)(30,0.2230)(40,0.1985)(50,0.1252)};
    \addplot coordinates{(10,0.1074)(20,0.0824)(30,0.0624)(40,0.0485)(50,0.0274)};
\end{axis}
\end{tikzpicture}
\\
\ref{named}
\caption{Alignment accuracy according to different levels of the structural noise (\%) on four synthetic datasets.}
\label{Q4plot_structure}
\end{figure*}
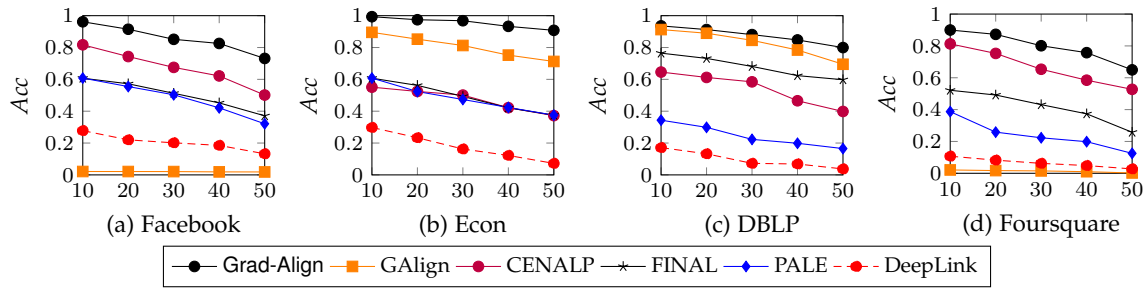

\begin{figure*}[t!]
\pgfplotsset{footnotesize,samples=10}
\centering
\begin{tikzpicture}
\begin{axis}[
legend columns=4, 
legend entries={\textsf{Grad-Align},GAlign,CENALP,FINAL},
legend to name=named,
xlabel=(a) Econ,ylabel={\em Acc},  width = 4cm, height = 3.7cm,
xmin=10,xmax=50,ymin= 0.2,ymax=1,
xtick={10,20,...,50},ytick={0.2,0.4,...,1}]
    \addplot+[color=black] coordinates{(10,0.9841)(20,0.9744)(30,0.9654)(40,0.9539)(50,0.9327)};
    \addplot+[color=orange] coordinates{(10,0.8553)(20,0.8324)(30,0.7958)(40,0.7514)(50,0.6895)};
    \addplot+[color=purple] coordinates{(10,0.5509)(20,0.5241)(30,0.4912)(40,0.4621)(50,0.4321)};
    \addplot coordinates{(10,0.7646)(20,0.7042)(30,0.6247)(40,0.5352)(50,0.4851)};

\end{axis}
\end{tikzpicture}
\begin{tikzpicture}
\begin{axis}[
xmax=20,xmin=1,ymin= 0,ymax=1,
xlabel=(b) DBLP,ylabel={\em Acc},  width = 4cm, height = 3.7cm,
xmin=10,xmax=50,ymin= 0.2,ymax=1,
xtick={10,20,...,50},ytick={0.2,0.4,...,1}]
    \addplot+[color=black] coordinates{(10,0.9249)(20,0.9154)(30,0.8721)(40,0.8475)(50,0.8152)};
    \addplot+[color=orange]     coordinates{(10,0.9126)(20,0.8845)(30,0.8215)(40,0.7543)(50,0.7138)};
    \addplot+[color=purple] coordinates{(10,0.6458)(20,0.6125)(30,0.5885)(40,0.5248)(50,0.4857)};
    \addplot coordinates{(10,0.7646)(20,0.7042)(30,0.6247)(40,0.5354)(50,0.4821)};
\end{axis}
\end{tikzpicture}

\ref{named}
\caption{Alignment accuracy according to different levels of the attribute noise (\%) on two synthetic datasets.}
\label{Q4plot_att}
\end{figure*}
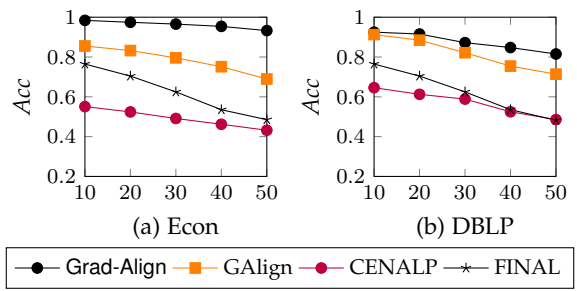
\subsubsection{Comparison with State-of-the-Art Approaches (Q3)}

The performance comparison between our \textsf{Grad-Align} method and five state-of-the-art NA methods, including GAlign~\cite{trung2020adaptive}, CENALP~\cite{du2019joint}, FINAL~\cite{zhang2016final}, PALE~\cite{man2016predict}, and DeepLink~\cite{zhou2018deeplink}, is comprehensively presented in Table \ref{Q3table} with respect to {\em Acc} and $Precision@q$ using three real-world and four synthetic datasets. We note that the hyperparameters  in all the aforementioned state-of-the-art methods are tuned differently according to each individual dataset so as to provide the best performance. In Table \ref{Q3table}, the value with an underline indicates the best performer for each case. We would like to make the following insightful observations:
\begin{itemize}
    \item Our \textsf{Grad-Align} method consistently and significantly outperforms all the state-of-the-art methods regardless of the datasets and the performance metrics.
    \item The second-best performer depends on the datasets, which implies that one does not dominate others among the five state-of-the-art NA methods.

    \item The performance gap between our \textsf{Grad-Align} method ($X$) and the second-best performer ($Y$) is the largest when the Douban dataset is used; the maximum improvement rate of $75.54\%$ is achieved in terms of {\em Acc}, where the improvement rate (\%) is given by $\frac{X-Y}{Y}\times 100$.
    
    \item The Douban dataset is relatively challenging since it exhibits not only quite different scales between two networks $G_s$ and $G_t$ but also the inconsistent network topological structure between $G_s$ and $G_t$ \cite{du2019joint,zhang2016final}. In the dataset, FINAL and GAlign, corresponding to the second and third best performers, respectively, perform satisfactorily by taking advantage of node attribute information; however, other NA methods such as CENALP, PALE, and DeepLink that depend heavily on the topological structure do not perform well.
    
    \item For the non-attributed networks such as Fb-Tw, Facebook (and its noisy version), and Foursquare (and its noisy version), \textsf{Grad-Align} is far superior to state-of-the-art methods. In such datasets, GAlign built upon the GCN model performs the worst. This implies that designing a NA method based solely on GNN models would not guarantee satisfactory performance.
    \item Let us discuss the performance regarding four synthetic datasets. As addressed before, GAlign is quite inferior to other methods on Facebook (not having node attributes), but becomes the second-best performer on Econ and DBLP (having node attributes). In contrast, CENALP and PALE reveal convincing results on Facebook but weakly perform on Econ and DBLP. This implies that state-of-the-art methods are highly dependent on either the topological consistency or the attribute consistency. However, our method is shown to be robust to both structural and attribute noises by virtue of the dual-perception similarity measures.
\end{itemize}

\subsubsection{Robustness to Network Noises (Q4)}

We now compare our \textsf{Grad-Align} method to the five state-of-the-art NA methods in more difficult settings that often occur in real environments: 1) the case in which a large portion of edges in two given networks $G_s$ and $G_t$ are  removed and 2) the case in which a large portion of node attributes in $G_s$ and $G_t$ are missing and replaced with zeros. The performance on {\em Acc} is presented according to different structural and attribute noise levels in Figs. \ref{Q4plot_structure} and \ref{Q4plot_att}, respectively, using synthetic datasets.
\begin{itemize}
    \item In Fig. \ref{Q4plot_structure}, we show how {\em Acc} behaves when we randomly remove $\{10,\cdots,50\}\%$ of existing edges in each dataset. Our findings demonstrate that, while the performance tends to degrade with an increasing portion of missing edges for all the methods, our \textsf{Grad-Align} consistently achieves superior performance compared to state-of-the-art methods. It is also seen that \textsf{Grad-Align} tends to be quite robust to such structural noises especially for datasets having node attributes (i.e., Econ and DBLP). Moreover, since PALE and DeepLink depend solely on the structural consistency of nodes, the performance degradation is significant.
    \item In Fig. \ref{Q4plot_att}, we show how the performance varies when we randomly replace $\{10,\cdots,50\}\%$ of node attributes in each dataset. Due to the fact that node attributes are unavailable on Facebook and Foursquare, the performance is presented using the Econ and DBLP datasets. It is verified that \textsf{Grad-Align} tends to be quite robust to node attribute noises.
\end{itemize}
It is worth noting that \textsf{Grad-Align} reveals strong robustness to the network noises especially for attributed networks. By virtue of the dual-perception similarity, our method is capable of bringing the synergy effect through reinforcing consistencies of nodes. In other words, when the structural noise increases, the embedding similarity plays an important role in guaranteeing high performance, while the Tversky similarity contributes more to the reliable performance  as the attribute noise increases.

\subsubsection{Impacts of Components in \textsf{Grad-Align} (Ablation Studies) (Q5)}
\label{sec5.5.5}
\begin{table}[t!]
\footnotesize 
\centering
\scalebox{1}{
\begin{tabular}{ccccc}
\toprule 
\multicolumn{1}{c}{Method} & \multicolumn{1}{c}{Metric} &\multicolumn{1}{c}{\begin{tabular}[c]{@{}c@{}}Fb-Tw\end{tabular}} & \multicolumn{1}{c}{\begin{tabular}[c]{@{}c@{}}Douban\end{tabular}} & \multicolumn{1}{c}{\begin{tabular}[c]{@{}c@{}}Am-ID\end{tabular}} \\
\midrule
\multirow{4}{*}{\rotatebox{0}{\textsf{Grad-Align}}}
& {\em Acc} & 0.96 &  0.59 &  0.96\\
&$Precision@1$&  0.96&  0.67 &  0.96\\
&$Precision@5$&  0.98 &  0.86 &  \underline{0.97}\\
&$Precision@10$& \underline{0.99} &  0.90 &  \underline{0.98}\\

\multirow{4}{*}{\rotatebox{0}{\textsf{Grad-Align}-1}}
& {\em Acc} & {0.47} &  {0.32} &  {0.85}\\
&$Precision@1$&  {0.47} &  {0.42} &  {0.87}\\
&$Precision@5$&  {0.65} &  {0.55} &  {0.91}\\
&$Precision@10$&  {0.77} &  {0.71} &  {0.94}\\

\multirow{4}{*}{\rotatebox{0}{\textsf{Grad-Align}-2}}
& {\em Acc} &  {0.95} &  {0.28} &  {0.85}\\
&$Precision@1$&  {0.95} &  {0.38} &  {0.86}\\
&$Precision@5$&  {0.97} &  {0.45} &  {0.90}\\
&$Precision@10$&  {0.98} &  {0.55} &  {0.95}\\

\multirow{4}{*}{\rotatebox{0}{\textsf{Grad-Align}-3}}
& {\em Acc} &  {0.05} &  {0.31} &  {0.31}\\
&$Precision@1$&  {0.07} &  {0.41} &  {0.37}\\
&$Precision@5$&  {0.11} &  {0.57} &  {0.49}\\
&$Precision@10$&  {0.17} &  {0.68} &  {0.58}\\

\multirow{4}{*}{\rotatebox{0}{\textsf{Grad-Align-EA}}}
& {\em Acc} & \underline{0.97} &  \underline{0.60} &  \underline{0.97}\\
&$Precision@1$&  \underline{0.97} &  \underline{0.68} &  \underline{0.97}\\
&$Precision@5$&  \underline{0.99} &  \underline{0.87} &  \underline{0.97}\\
&$Precision@10$& \underline{0.99} &  \underline{0.91} &  \underline{0.98}\\
\bottomrule
\end{tabular}}
\caption{Performance comparison among \textsf{Grad-Align} and its variants in terms of the {\em Acc} and {\em Precision@q} on the three real-world datasets. Here, the best case is highlighted using underlines.}
\label{ablation}
\end{table}
In order to discover what role each component plays in the success of the proposed \textsf{Grad-Align} method, we conduct an ablation study by removing each module in our method. Additionally, we empirically show the gain of the edge augmentation module in \textsf{Grad-Align}. 
\begin{itemize}
    \item \textsf{Grad-Align}: This corresponds to the original \textsf{Grad-Align} method without removing any components.
    \item \textsf{Grad-Align-1}: The module of gradual alignment is removed. In other words, all node pairs are found at once using the dual-perception similarity ${\bf S}^{(1)}$ (i.e., $iter=1$).
    \item \textsf{Grad-Align-2}: The module of embedding similarity calculation is removed. That is, \textsf{Grad-Align} is performed only using ${\bf S}_{Tve}^{(i)}$.
    \item \textsf{Grad-Align-3}: The module of Tversky similarity calculation is removed. That is, \textsf{Grad-Align} is performed only using ${\bf S}_{emb}$.
    \item \textsf{Grad-Align-EA}: The edge augmentation module is added to our original \textsf{Grad-Align} method. The threshold $\tau$ is set to 0.7 for the three real-world datasets.
\end{itemize}
The performance comparison among the original \textsf{Grad-Align} and its variants, including three methods with each component removal and \textsf{Grad-Align-EA}, is presented in Table \ref{ablation} with respect to {\em Acc} and {\em Precision@q} using three real-world datasets. In comparison with three methods with each component removal (i.e., ablation studies), one can see that the original \textsf{Grad-Align} method always exhibits potential gains over other variants, which demonstrate that each module plays a critical role together in discovering node correspondences. More interestingly, we observe that the performance gap between \textsf{Grad-Align} and \textsf{Grad-Align-3} tends to be much higher than \textsf{Grad-Align} and other variants especially for the Fb-Tw dataset in which node attributes are not available. This finding indicates that the Tversky similarity calculation is indeed very crucial while the other modules further boost the performance as a supplementary role. Furthermore, in comparison with \textsf{Grad-Align-EA}, it is observed that the performance is enhanced compared to the original \textsf{Grad-Align} method since the edge augmentation module reinforces the structural consistency of given networks over iterations. However, such a gain is possible at the cost of extra computational complexities.

\subsubsection{Computational Complexity (Q6)}
\begin{figure}[t!]
    \centering
    \pgfplotsset{compat=1.11,
    /pgfplots/ybar legend/.style={
    /pgfplots/legend image code/.code={%
       \draw[##1,/tikz/.cd,yshift=-0.25em]
        (0cm,0cm) rectangle (3pt,0.8em);},
   },
}

\begin{tikzpicture}
    \begin{semilogyaxis}[
        width  = 0.85\columnwidth,
        height = 3.3cm,
        major x tick style = transparent,
        ymode=log,
        ybar=0,
        bar width=0.025*\columnwidth,
        ymajorgrids = true,
        ylabel = Execution time (s),
        xlabel = (a) Three real-world datasets,
        symbolic x coords={Fb-Tw, Douban, Am-ID},
        xtick = data,
        enlarge x limits=0.3,
        ymin=0,
        legend cell align=center,
        legend style={at={(0.5,1.1)}, anchor=south,legend columns=3,font=\footnotesize}
    ]

        \addplot[style={black,fill=teal,mark=none}]
            coordinates {(Fb-Tw, 20) (Douban, 65) (Am-ID, 880)};
        \addplot[style={black,fill=lightgray,mark=none}]
            coordinates {(Fb-Tw, 20) (Douban, 103) (Am-ID, 455)};
        \addplot[style={black,fill=brown,mark=none}]
            coordinates {(Fb-Tw, 4512) (Douban,10157) (Am-ID, 61801)};
        \addplot[style={black,fill=pink,mark=none}]
            coordinates {(Fb-Tw, 131) (Douban, 135) (Am-ID, 811)};
        \addplot[style={black,fill=olive,mark=none}]
            coordinates {(Fb-Tw, 151) (Douban, 218) (Am-ID, 1135)};
        \addplot[style={black,fill=orange,mark=none}]
            coordinates {(Fb-Tw, 433) (Douban, 485) (Am-ID, 2945)};

        \legend{\textsf{Grad-Align},GAlign,CENALP,FINAL,PALE,DeepLink} 
    \end{semilogyaxis}
\end{tikzpicture}

    \begin{tikzpicture}
    \begin{semilogyaxis}[
        width  = 0.95\columnwidth,
        height = 3.3cm,
        major x tick style = transparent,
        ymode=log,
        ybar=0,
        xlabel = (b) Four synthetic datasets,
        bar width=0.02*\columnwidth,
        ymajorgrids = true,
        ylabel = Execution time (s),
        symbolic x coords={Facebook, Econ, DBLP, Foursquare},
        xtick = data,
        enlarge x limits=0.3,
        ymin=0,
        legend cell align=center,
        legend style={at={(0.5,1.1)}, anchor=south,legend columns=3,font=\footnotesize}
    ]

        \addplot[style={black,fill=teal,mark=none}]
            coordinates {(Facebook, 23) (Econ, 26) (DBLP, 58) (Foursquare, 2017)};
        \addplot[style={black,fill=lightgray,mark=none}]
            coordinates {(Facebook, 23) (Econ, 39) (DBLP, 112) (Foursquare, 1072)};
        \addplot[style={black,fill=brown,mark=none}]
            coordinates {(Facebook, 4557) (Econ, 5125) (DBLP, 13454) (Foursquare, 184125)};
        \addplot[style={black,fill=pink,mark=none}]
            coordinates {(Facebook, 101) (Econ, 132) (DBLP, 314) (Foursquare, 2701)};
        \addplot[style={black,fill=olive,mark=none}]
            coordinates {(Facebook, 171) (Econ, 181) (DBLP, 545) (Foursquare, 7854)};
        \addplot[style={black,fill=orange,mark=none}]
            coordinates {(Facebook, 458) (Econ, 512) (DBLP, 912) (Foursquare, 10042)};

    
    \end{semilogyaxis}
\end{tikzpicture}
    \caption{The runtime complexity of \textsf{Grad-Align} and five state-of-the-art methods.}
    \label{Q6plot}
\end{figure}

To empirically validate the average runtime complexity of our \textsf{Grad-Align} method, we conduct experiments using the three real-world datasets as well as the four synthetic datasets whose number of nodes is sufficiently large (see Table \ref{datasettable}). Fig. \ref{Q6plot} illustrates the execution time (in seconds) of \textsf{Grad-Align} and five state-of-the-art methods on the three real-world datasets and the four synthetic datasets. It is seen that CENALP has the highest runtime for all the datasets. On the other hand, the computational complexity of \textsf{Grad-Align} is competitive to that of light-weight models such as GAlign and FINAL. Besides, from Fig. \ref{Q2plot}c, due to the fact that reliable performance is still guaranteed even with small $iter$ for the Am-ID dataset having the largest graph size, we can greatly reduce the runtime of \textsf{Grad-Align} by setting $iter$ sufficiently small (e.g., $iter=5$) while maintaining the satisfactory performance.

\section{Concluding Remarks}\label{section 6}
In this paper, we explored an open yet important problem of how much gradual alignment in the NA task is beneficial over the discovery of all node pairs at once. To tackle this challenge, we introduced \textsf{Grad-Align}, a novel NA method that gradually aligns only a part of node pairs iteratively until all node pairs are found across two different networks by making full use of the information of already aligned node pairs having strong consistency in discovering weakly consistent node pairs. To realize our method, we proposed the dual-perception similarity consisting of the embedding similarity and the Tversky similarity. Specifically, we presented an approach to 1) calculating the similarity of multi-layer embeddings based on GNNs using the weight-sharing technique and the layer-wise reconstruction loss, 2) calculating the Tversky similarity to the network imbalance problem that often occurs in practice, and 3) iteratively updating our dual-perception similarity for gradual matching. Additionally, to boost the performance of the original \textsf{Grad-Align}, we developed \textsf{Grad-Align-EA} integrating an edge augmentation module into \textsf{Grad-Align}, which helps reinforce the structural consistency across different networks. Using various real-world and synthetic datasets, we demonstrated that our \textsf{Grad-Align} method remarkably outperforms all state-of-the-art NA methods in terms of the {\em Acc} and {\em Precision@q} while showing significant gains over the second-best performer by up to 75.54\%. We also empirically validated the robustness of \textsf{Grad-Align} to both structural and attribute noises by virtue of our judiciously devised dual-perception similarity. Furthermore, not only the effect of key hyperparameters including $\alpha$ but also the impacts of each module in \textsf{Grad-Align} were comprehensively investigated.

Potential avenues of our future research include the design of an effective GNN model aimed at performing the NA task even in networks without node attributes.

\ifCLASSOPTIONcompsoc
  \section*{Acknowledgments}
\else
  \section*{Acknowledgment}
\fi

This work was supported by the National Research Foundation of Korea (NRF) grant funded by the Korea government (MSIT) under Grants 2021R1A2C3004345 and RS-2023-00220762. An earlier version of this article was presented in part at the AAAI Conference on Artificial Intelligence, Virtual Event, February/March 2022 \cite{park2022grad}.


\ifCLASSOPTIONcaptionsoff
  \newpage
\fi



\bibliographystyle{IEEEtran}
\bibliography{IEEEabrv,1.Citation_list}
%

%

\end{document}